\documentclass[runningheads]{llncs}
\usepackage{amsmath}
\usepackage{stmaryrd}
\usepackage{mathbbol}
\usepackage{yhmath}
\usepackage{url}
\usepackage{times}
\usepackage{latexsym,amsmath,epsfig,amssymb,graphics,tabularx}
\usepackage{color,graphicx}
\usepackage{amsfonts}
\usepackage{graphpap}
\usepackage{listings}
\input{Userdef.sty}

\newcommand{\close}{\emph{cl}}

\title{Extending Hybrid CSP with Probability and Stochasticity}
\author{Yu Peng \and Shuling Wang
\and Naijun Zhan \and Lijun Zhang
}

\institute{State Key Laboratory of Computer Science, Institute of Software, CAS, China}

\begin{document}
\maketitle

\begin{abstract}
  Probabilistic and stochastic behavior are omnipresent in computer
controlled systems, in particular, so-called safety-critical hybrid
systems, because of fundamental properties of nature, uncertain environments,
or simplifications to overcome complexity. Tightly intertwining
discrete, continuous and stochastic dynamics complicates modelling,
analysis and verification of stochastic hybrid systems (SHSs). In the literature,
this issue has been extensively investigated, but unfortunately
it still remains challenging as no promising general solutions are available
yet. In this paper, we give our effort by proposing a general compositional
approach for modelling and verification of SHSs. First, we extend
Hybrid CSP (HCSP), a very expressive and process algebra-like formal
modeling language for hybrid systems, by introducing probability and
stochasticity to model SHSs, which is called stochastic HCSP (SHCSP).
To this end, ordinary differential equations (ODEs) are generalized by
stochastic differential equations (SDEs) and non-deterministic choice is
replaced by probabilistic choice. Then, we extend Hybrid Hoare Logic
(HHL) to specify and reason about SHCSP processes. We demonstrate
our approach by an example from real-world.
\end{abstract}

\section{Introduction}
Probabilistic and stochastic behavior are omnipresent in computer
controlled systems, such as safety-critical hybrid systems, because of
uncertain environments, or simplifications to overcome complexity. For
example, the movement of aircrafts could be influenced by wind; in
networked control systems, message loss and other random effects
(e.g., node placement, node failure, battery drain, measurement
imprecision) may happen.

Stochastic hybrid systems (SHSs) are systems in which discrete,
continuous and stochastic dynamics tightly intertwine. As many of SHSs
are safety-critical, a thorough validation and verification activity
is necessary to enhance the quality of SHSs and, in particular, to
fulfill the quality criteria mandated by the relevant standards. But
modeling, analysis and verification of SHSs is difficult and
challenging.  An obvious research line is to extend hybrid automata
\cite{Henzinger96}, which is the most popular model for traditional
hybrid systems, by adding probability and stochasticity. Then,
verification of SHSs can be done naturally through reachability
analysis, either by probabilistic model-checking
\cite{APLS08,AG97,Bujorianu04,Sproston00,HHWZ10,ZSRHH10,FHHWZ11}, or
by simulation i.e., statistical model-checking \cite{MS06,ZPC13}.
Along this line, several different notions of \emph{stochastic hybrid
  automata} have been proposed
\cite{APLS08,AG97,Bujorianu04,Sproston00,HHWZ10,ZSRHH10,FHHWZ11}, with
the difference on where to introduce randomness. One option is to
replace deterministic jumps by probability distribution over
deterministic jumps. Another option is to generalize differential
equations inside a mode by stochastic differential equations. Stochastic hybrid systems comprising stochastic differential equations
 have been investigated in~\cite{HuLS00,Bujorianu05,APLS08}.  More
general models can be obtained by mixing the above two choices, and by
combining them with memoryless timed probabilistic jumps \cite{BL06},
with a random reset function for each discrete jump \cite{FHHWZ11}. An
overview of this line can be found in \cite{BL06}.

To model complex systems, some compositional modelling formalisms have
been proposed, e.g., HMODEST \cite{HHHK13} and stochastic hybrid
programs \cite{Platzer11}. 
 HCSP due to He, Zhou, et al \cite{He94,ZWR96} is an extension of CSP
\cite{Hoare85} by introducing differential equations to model
continuous evolution and three types of interruptions (i.e.,
communication interruption, timeout and boundary condition) to model
interactions between continuous evolutions and discrete jumps in HSs.
The extension of CSP to probabilistic setting has been investigated by
Morgan et al. \cite{MM96}.  In this paper,
we propose a compositional approach for modelling and verification of
stochastic hybrid systems. First, we extend Hybrid CSP (HCSP), a very
expressive and process algebra-like modeling language for hybrid
systems by introducing probability and stochasticity, called
stochastic HCSP (SHCSP), to model SHSs. In SHCSP, ordinary
differential equations (ODEs) are generalized to stochastic
differential equations (SDEs), and non-deterministic choice is replaced
by probabilistic choice.  Different from Platzer's work~\cite{Platzer11},  SHCSP  provides more expressive constructs for describing hybrid systems, including communication, parallelism, interruption, and so on.

Probabilistic model-checking of SHSs does not scale, in particular,
taking SDEs into account. For example, it is not clear how to
approximate the reachable sets of a simple linear SDEs with more than
two variables.  Therefore, existing verification techniques based on
reachability analysis for SHSs are inadequate, and new approaches are
expected.  As an alternative, in \cite{Platzer11}, Platzer for the first time
investigated how to extend deductive verification to SHSs.   Inspired by Platzer's work, for specifying and reasoning about
SHCSP process, we extend Hybrid Hoare Logic \cite{LLQZ10}, which is an
extension of Hoare logic \cite{Hoare69} to HSs, to SHSs. Comparing
with Platzer's work, more computation features of SHSs, and more expressive constructs such as concurrency,  communication and interruption, can be well
handled in our setting. We demonstrate our approach by modeling and
verification of the example of aircraft planning problem from the
real-world.

%

\section{Background and Notations}

Assume that $\mathcal{F}$  is a $\sigma$-algebra on set $\Omega$ and $P$ is a probability measure on
  $(\Omega, \mathcal{F})$, then $(\Omega, \mathcal{F}, P)$ is called a \emph{probability space}. We here assume that every subset of a null set (i.e., $P(A)=0$) with probability 0 is measurable.
A property which holds with probability 1 is said to hold \emph{almost surely} (\emph{a.s.}).
A \emph{filtration} is a sequence of $\sigma$-algebras $\{\mathcal{F}_t\}_{t\geq 0}$ with $\mathcal{F}_{t_1}\subseteq \mathcal{F}_{t_2}$ for all $t_1<t_2$. We always assume that a filtration $\{\mathcal{F}_t\}_{t\geq 0}$ has been completed to include all null sets and is right-continuous.

Let $\mathcal{B}$ represent the Borel $\sigma$-algebra on  $\mathbb{R}^n$, i.e. the $\sigma$-algebra generated by all open subsets. 
A mapping $X:\Omega \to \mathbb{R}^n$ is called $\mathbb{R}^n$-valued \emph{random variable} if for each $B\in \mathcal{B}$, we have $X^{-1}(B) \in \mathcal{F}$, i.e. $X$ is \emph{$\mathcal{F}$-measurable}. A \emph{stochastic~ process} $X$ is a
function $X:T\times \Omega \to \mathbb{R}^n$  such that for each $t\in T$, $X(t,\cdot):\Omega \to \mathbb{R}^n$ is a random variable, and for each $\omega \in \Omega$, $X(\cdot,\omega):T\to \mathbb{R}^n$ corresponds to a \emph{sample path}. A stochastic process $X$ is \emph{adapted} to a filtration $\{\mathcal{F}_t\}_{t\geq 0}$ if $X_t$ is $\mathcal{F}_t$-measurable. Intuitively, a filtration represents all available historical information of a stochastic process, but nothing related to its future. 
A $c\grave{a}dl\grave{a}g$ function defined on $\mathbb{R}$  is \emph{right continuous} and has \emph{left limit}. A stochastic process $X$ is $c\grave{a}dl\grave{a}g$ iff all of its paths $t\to X_t(\omega)$ (for each $\omega \in \Omega$) are $c\grave{a}dl\grave{a}g$. A $d$-dimensional \emph{Brownian motion} $W$ is a stochastic process with $W_0=0$ that is continuous almost surely everywhere and has independent increments with time, i.e. $W_t-W_s \sim N(0, t-s)~(\mbox{for }0\leq s < t)$, where $N(0, t-s)$ denotes the normal distribution with mean 0 and variance $t-s$. Brownian motion is mathematically extremely complex. Its path  is almost surely continuous everywhere but differentiable nowhere. Intuitively, $W$ can be understood as the limit of a random walk. A \emph{Markov time} with respect to a
 stochastic process $X$ is a random variable $\tau$ such that  for any $t\geq 0$, the event $\{\tau \leq t\}$ is determined by (at most) the information up to time $t$, i.e. $\{\tau \leq t\}\in \mathcal{F}_t$.

We  use \emph{stochastic differential equation} (SDE) to model stochastic continuous evolution, which is of the form $dX_t=b(X_t)dt+\sigma(X_t)dW_t$, where $W_t$ is a Brownian motion.   In which, the drift coefficient $b(X_t)$ determines how the deterministic part of $X_t$ changes with respect to time and the diffusion coefficient $\sigma(X_t)$ determines the stochastic influence to $X_t$ with respect to the Brownian motion $W_t$. Obviously, any solution to an SDE is a stochastic process.

\section{Stochastic HCSP}
\label{sec:Stochastic HybridCSP}

A system in Stochastic HCSP (SHCSP) consists of a finite set of sequential processes in parallel which communicate via channels
synchronously. Each sequential process is represented as a collection of stochastic processes,
each of which arises from the interaction of discrete computation and
stochastic continuous dynamics modeled  by stochastic differential equations.

Let $\textit{Proc}$ represent the set of SHCSP processes, $\Sigma$ the set of channel names. The syntax of SHCSP is given as follows:
 \[
	\begin{array}{lll}
	 P  & ::= & \pskip \mid x :=e  \mid  ch?x \mid ch!e
	           \mid P;Q  \mid B \rightarrow P \mid  P^*\\
	      && \mid  P \sqcup_{p} Q \mid \langle d s = b dt + \sigma dW \& B\rangle\\
	      & & \mid \exempt{\langle d s = b dt + \sigma dW \& B\rangle}{i\in I}{\omega_i \cdot ch_i*}{Q_i} \\[2mm]
	 S & ::= & P \mid S\| S
	 \end{array}
 \]
 Here  \sm{ch,ch_i\in \Sigma}, $ch_i*$ stands for a communication event, e.g. $ch?x$ or $ch!e$,  $x$ is a variable,   \sm{B} and \sm{e} are Boolean and arithmetic expressions, \sm{P, Q, Q_i\in \textit{Proc}} are sequential processes, $p\in [0,1]$ stands for the probability of the choice between $P$ and $Q$, $s$ for a vector of continuous variables, $b$ and $\sigma$ for functions of $s$, $W$ for the Brownian motion process. At the end, $S$ stands for a system, i.e., a SHCSP process.

As defined in the syntax of $P$, the  processes in the first line are original from HCSP, while the last two lines are new for SHCSP. The individual constructs can be understood intuitively as follows:

\begin{itemize}
\item $\pskip$, the assignment $x:=e$, the sequential composition $P; Q$, and the alternative statement $B \rightarrow P$ are defined as usual.
	\item $ch?x$ receives a value along channel $ch$ and assigns it to $x$.
	\item $ch!e$ sends the value of $e$ along channel $ch$.
	A communication takes place when both the sending and the receiving parties are ready, and may cause one side to wait.
		\item The repetition $P^*$ executes $P$ for some finite number of times.
\item $P \sqcup_{p} Q$ denotes probabilistic choice. It behaves as  $P$ with probability $p$ and as $Q$  with probability $1-p$.
	\item $\langle d s = b dt + \sigma dW\& B\rangle$ specifies that the system evolves according to the stochastic process defined by the stochastic differential equation $d s = b dt + \sigma dW$.
	As long as the boolean expression $B$, which defines the
	{\em domain of $s$}, turns false, it terminates. We will later use $d(s)$ to return the dimension of   $s$.
	\item $\exempt{\langle d s = b dt + \sigma dW \& B\rangle}{i\in I}{\omega_i\cdot ch_i*}{Q_i}$ behaves like $ \langle d s = b dt + \sigma dW\& B\rangle$, except that the stochastic evolution is preempted as soon as one of the communications $ch_i*$ takes place, after that the respective $Q_i$ is executed. $I$ is supposed to be finite and for each $i\in I$, $\omega_i \in \mathbb{Q}^+$ represents the \emph{weight} of $ch_i*$. If one or more communications are ready at the same time, say they are $\{ch_j*\}_{j\in J}$ with $J\subseteq I$ and $|J|\geq 1$, then
   $ch_j$ is chosen with the probability $\frac{\omega_j}{\Sigma_{j\in J}\omega_j}$, for each
    $j\in J$. If the stochastic dynamics terminates before a communication among $\{ch_i*\}_I$ occurring, then the process terminates without communicating.
	    \item $S_1\|S_2$ behaves as if $S_1$ and $S_2$ run independently except that all communications along the common channels connecting $S_1$ and $S_2$ are to be synchronized. The processes $S_1$ and $S_2$ in parallel can neither share variables, nor input nor output channels.
\end{itemize}

\subsection{A Running Example}

We use SHCSP to model the aircraft position during the flight, which is inspired from~\cite{Prandini08}. Consider an aircraft that is following a flight path consisting of a sequence of line segments at a fixed altitude. Ideally,  the aircraft should fly at a constant velocity $v$ along the nominal path, but due to the wind or cloud disturbance, the deviation of the aircraft from the path may occur. For safety, the aircraft should follow a correction heading to get back to the nominal path as quickly as possible. On one hand, the correction heading should be orthogonal to the nominal path for the shortest way back, but on the other hand, it should also go ahead to meet the destination. Considering these two objectives, we assume the correction heading  always an acute angle with the nominal path.

Here we model the behavior of the aircraft along one line segment. Without loss of generality, we assume the segment is along $x$-axis, with $(x_s, 0)$ as the starting point and $(x_e, 0)$ as the ending point. When the aircraft deviates from the segment with a vertical distance greater than $\lambda$, we consider it enters a dangerous state. Let $(x_s, y_0)$ be the initial position of the aircraft in this segment, then the future position of the aircraft $(x(t), y(t))$ is governed by the following SDE:
\[
\left(
\begin{array}{c}
  dx(t)\\
  dy(t)
\end{array}\right)
= v
\left(
\begin{array}{c}
  cos(\theta(t))\\
  sin(\theta(t))
\end{array}\right)
dt + dW(t)\]
where $\theta(t)$ is the correction heading and is defined with a constant degree $ \frac{\pi}{4}$ when the aircraft deviates from the nominal path:
\[\theta(t) =
\left\{
\begin{array}{ll}
 -\frac{\pi}{4} \qquad \qquad&\mbox{if } y(t)>0\\
0 \qquad &\mbox{if } y(t)=0\\
 \frac{\pi}{4} \qquad &\mbox{if } y(t)<0
\end{array}\right.\]

Define $B $ be $ x_s\leq x \leq x_e $, the movement of the aircraft described above can be modelled by the following SHCSP process $P_{Air}$:
\[x = x_s; y  = y_0; \langle[dx, dy]^T = v[cos(\theta(t)), sin(\theta(t))]^T dt +
dW(t)\&B\rangle\]

\section{Operational Semantics}
Before giving operational semantics, we introduce some notations first.
\paragraph{\textbf{System Variables}}

	In order to interpret SHCSP processes, we use non-negative reals \sm{\RR^+}  to model time, and introduce a global clock $\now$ as a system variable to record the time in the execution of a process.
	 A \emph{timed communication} is of the form \sm{\pair{ch.c}{b}}, where \sm{ch\in \Chan}, \sm{c \in \RR} and \sm{b \in \RR^+}, representing that a communication along channel \sm{ch} occurs at time \sm{b} with value \sm{c} transmitted. The set \sm{\Chan\times \RR \times \RR^+} of all timed communications is denoted by \sm{\Tevent}. The set of all timed traces is
	  \[\Tevent^*_{\leq} = \{\trace \in \Tevent^* \mid \mbox{ if } \pair{ch_1.c_1}{b_1} \mbox{ precedes }\pair{ch_2.c_2}{b_2} \mbox{ in }\trace, \mbox{ then } b_1\leq b_2\}.\]
         If \sm{C\subseteq \Chan}, \sm{\trace \! \upharpoonright_C} is the projection of \sm{\trace} onto \sm{C} such that only the timed communications along channels of \sm{C} in $\trace$ are preserved.
         Given two timed traces \sm{\trace_1, \trace_2}, and \sm{X\subseteq \Chan},   the \emph{alphabetized parallel} of \sm{\trace_1} and \sm{\trace_2} over \sm{X},
denoted by \sm{\trace_1 \spara \trace_2}, results in the following set  of timed traces
$$\{ \trace \mid
  \trace \! \upharpoonright_{\Chan -(\Chan(\trace_1)\cup \Chan(\trace_2))} =\epsilon, \trace \! \upharpoonright_{\Chan(\trace_1)} = \trace_1, \trace \! \upharpoonright_{\Chan(\trace_2)} = \trace_2 \mbox { and }
    \trace \! \upharpoonright_{X} =  \trace_1 \! \upharpoonright_{X} = \trace_2 \! \upharpoonright_{X}\},$$
     where $\Chan(\trace)$ stands for the set of channels that occur in
 $\trace$.  \oomit{ defined in the following: \\[1mm]
 {\small $\begin{array}{rcl}
    \la \ra \spara \la \ra &\Define&  \{\la \ra \}\\
    \pair{ch.a}{b}\cdot \trace \spara \la \ra &\Define& \left\{ \begin{array}{ll}
        \pair{ch.a}{b}\cdot ( \trace \spara \la \ra )   &\mbox{if } ch \not \in X \\
       \emptyset ~~& \mbox{otherwise}
        \end{array} \right. \\
        \la \ra \spara \trace & \Define & \trace \spara \la \ra \\
    & & \hspace*{-2.8cm} \pair{ch_1.a}{t_1} \cdot \trace_1' \spara \pair{ch_2.b}{t_2} \cdot \trace_2' \\
     &\hspace*{-1cm} \Define & \hspace*{-.6cm}
    \left\{
    \begin{array}{l}
      \pair{ch_1.a}{t_1} \cdot (\trace_1' \spara \trace_2')\  \quad
      ~ \mbox{if \sm{ch_1=ch_2 \in X, a=b, t_1=t_2}}\\
            \pair{ch_1.a}{t_1} \cdot (\trace_1' \spara (\pair{ch_2.b}{t_2} \cdot  \trace_2'))
      \ \cup \pair{ch_2.b}{t_2} \cdot  ((\pair{ch_1.a}{t_1} \cdot \trace_1') \spara \trace_2') \\
      ~~~~ \quad  \mbox{otherwise if \sm{ch_1, ch_2\notin X, t_1= t_2}}\\
      \pair{ch_1.a}{t_1} \cdot ( \trace_1'  \spara(\pair{ch_2.b}{t_2} \cdot  \trace_2'))
      ~~~~ \mbox{otherwise if \sm{ch_1 \notin X, t_1\leq t_2} }\\
      \pair{ch_2.b}{t_2} \cdot ((\pair{ch_1.a}{t_1} \cdot \trace_1') \spara \trace_2')
    ~~~~ \quad   \mbox{otherwise if \sm{ch_2\notin X}, and \sm{t_2\leq t_1} }\\
       \emptyset ~~~ \mbox{otherwise} \\
    \end{array}\right.
       \end{array}$ } \\[1mm]
where we lift the concatenation operator $\cdot$ for traces to a trace set, like in $\pair{ch_1.a}{t_1} \cdot (\trace_1' \spara \trace_2')$, the right of which returns a trace set. }

To model  synchronization of communication events, we need to describe their readiness. Because   a communication itself takes no time when both parties get ready, thus, at a time point, multiple communications may occur. 
In order to record the execution order of communications occurring at the same time point, we prefix each communication readiness a timed trace that happened before
the ready communication event.  Formally, each \emph{communication readiness} has the form of \sm{\gamma.ch?} or \sm{\gamma.ch!}, where $\trace \in \Tevent^*_{\leq}$.
 We denote by $\textit{RDY}$ the set of communication readiness in the sequel.

	Finally, we introduce two system variables, \sm{rdy} and \sm{tr}, to represent the ready set of communication events and the timed  trace accumulated  at the considered time,
respectively. 
In what follows, we use $\textit{Var}(P)$ to represent the set of process variables of $P$, plus the system variables $\{\textit{rdy}, \textit{tr}, \textit{now}\} $ introduced above, which take values respectively from $\RR \cup \textit{RDY} \cup \Tevent^*_{\leq} \cup \RR^+ $, denoted by $\textit{Val}$.

\paragraph{\textbf{States and Functions}}
To interpret a process $P\in \textit{Proc}$,   we define a  state $\textit{ds}$ as a mapping from $\textit{Var}(P)$ to $\textit{Val}$, and denote by $\mathcal{D}$ the set of such states. Because of stochasticity, we introduce a random variable $\rho:\Omega\to \mathcal{D}$ to describe a distribution of all possible states.
 In addition,we introduce  a stochastic process $H: \textit{Intv} \times\Omega\to \mathcal{D}$ to represent the continuous flow of process $P$ over the time interval  $\textit{Intv}$, i.e., state distributions on the interval. In what follows, we will abuse state distribution as state if not  stated otherwise.

 Given two states \sm{\rho_1} and \sm{\rho_2,}  we say \sm{\rho_1} and \sm{\rho_2} are parallelable  iff for each $\omega \in \Omega,$ \sm{\textit{Dom}(\rho_1(\omega))\cap\textit{Dom}(\rho_2(\omega)) =\{\textit{rdy}, tr, \now\}} and \sm{\rho_1(\omega)(\now)=\rho_2(\omega)(\now).} Given two parallelable states \sm{\rho_1} and \sm{\rho_2,}  paralleling them over \sm{X\subseteq \Sigma} results in  a set of new states, denoted by \sm{\rho_1 \uplus \rho_2,}  any of which  \sm{\rho} is given by
      \begin{eqnarray*}
    \rho(\omega)(v)  & \Define & \left\{ \begin{array}{ll}
      \rho_1(\omega)(v) & \mbox{ if } v\in \textit{Dom}(\rho_1(\omega))\setminus \textit{Dom}(\rho_2(\omega)), \\
      \rho_2(\omega)(v) & \mbox{ if } v\in \textit{Dom}(\rho_2(\omega))\setminus \textit{Dom}(\rho_1(\omega)), \\
       \rho_1(\omega)(\now) & \mbox{ if }  v=\now, \\
       \trace, \mbox{ where } \trace \in \rho_1(\omega)(tr) \spara \rho_2(\omega)(tr) & \mbox{ if } v= tr, \\
       \rho_1(\omega)(\textit{rdy}) \cup \rho_2(\omega)(\textit{rdy}) & \mbox{ if } v=\textit{rdy}.
       \end{array}
       \right.
       \end{eqnarray*}
  It makes no sense to  distinguish any two states in
       \sm{\rho_1\uplus \rho_2}, so hereafter we abuse \sm{\rho_1\uplus \rho_2} to represent  any of its elements.\sm{\rho_1\uplus \rho_2} will be used to represent states of parallel processes.

 Given a random variable $\rho$,  the update $\rho[v\to e]$  represents a new random variable such that for any $\omega \in \Omega$ and $x \in \textit{Var}$,
	 $ \rho[v \to e] (\omega)(x)$ is defined as the value of $e$ if $x$ is  $v$, and $\rho(\omega)(x)$ otherwise.
  Given a stochastic process $X  :[0,d)\times\Omega\to R^{d(s)}$, for any $t$ in the domain,  $ \rho[s \to X_t]$ is a new random variable such that  for any $\omega \in \Omega$ and $x \in \textit{Var}$,
  $\rho[s \to X_t] (\omega)(x)$ is defined as  $X(t, w)$ if $x$ is  $s$, and $\rho(\omega)(x)$ otherwise.

	 At last, we define $H_d^\rho$ as the stochastic process over interval $[\rho(now), \rho(now)+d]$ such that for any $t \in [\rho(now), \rho(now)+d]$  and any $\omega$,     $H_d^\rho(t, \omega) = \rho[now \mapsto t](\omega)$, and moreover, $H_d^{\rho, s, X}$ as the stochastic process over interval $[\rho(now), \rho(now)+d] $ such that for any $t \in [\rho(now), \rho(now)+d]$ and any $\omega$,
$H_d^{\rho, s, X}(t, \omega) = \rho[now \mapsto t, \textit{rdy} \mapsto \emptyset, s\mapsto X_t](\omega)$.

\subsection{Operational Semantics}
	Each transition relation has the form of $(P, \rho) \leadm{\alpha} (P', \rho', H)$, where $P$ and $P'$ are processes, $\alpha$ is an event, $\rho, \rho'$ are states, $H$ is a stochastic process. It expresses that starting from initial state $\rho$, $P$ evolves into $P'$ by
performing event $\alpha$, and ends in state $\rho'$ and  the execution history of $\alpha$ is recorded by continuous flow $H$. When the transition is discrete and thus produces a flow on a point interval (i.e. current time $now$), we will write $(P, \rho) \leadm{\alpha} (P', \rho')$ instead of $(P, \rho) \leadm{\alpha} (P', \rho', \{\rho(now) \mapsto \rho'\})$. The label $\alpha$ represents events, which can be an internal event like skip, assignment, or a termination of a continuous \emph{etc}, uniformly denoted by $\tau$, or an external communication event $ch!c$ or $ch?c$, or an internal communication $ch. c$, or a time delay $d$ that is a positive real number. We call the events but the time delay \emph{discrete events}, and will use $\beta$ to range over them. We define the dual of \sm{ch?c} (denoted by $\overline{ch?c}$) as  \sm{ch!c}, and vice versa, and define \sm{\textit{comm}(ch!c, ch?c)} or \sm{\textit{comm}(ch?c, ch!c)} as the communication \sm{ch.c}.
In the operational semantics, besides the timed communications, we will also record the internal events
that have occurred till now in $tr$.

For page limit, we present the semantics for the new constructs of SHCSP in the paper in Table \ref{tab:osemantics}. The semantics for the rest is same to HCSP, which can be found  in Appendix. 
The semantics for probabilistic choice is given by rules (PCho-1) and (PCho-2): it is defined with respect to a random variable $U$ which distributes uniformly in $[0,1]$, such that for any sample $\omega$, if $U(\omega) \leq p $, then $P$ is taken, otherwise, $Q$ is taken. In either case, it is assumed that an internal action happened.
A stochastic dynamics can continuously evolve for \sm{d} time units if \sm{B} always holds during this period, see (Cont-1). In (Cont-1), the variable $X$ solves the stochastic process and the ready set keeps unchanged, reflected by the flow $H_d^{\rho, s, X}$.  The stochastic dynamics terminates at a point whenever \sm{B} turns out false at a neighborhood of the point  (Cont-2). Communication interrupt evolves for \sm{d} time units if none of the communications \sm{ch_i*} is ready (IntP-1), or is interrupted to execute \sm{ch_{i_j}*} whenever \sm{ch_{i_j}*} occurs first (IntP-2), or terminates immediately in case the continuous terminates before  any communication happening (IntP-3).

The following theorem indicates that the semantics of SHCSP  is well defined.
\begin{theorem}
%
 For each transition $(P, \rho) \leadm{\alpha} (P', \rho', H)$, $H$ is an almost surely $c\grave{a}dl\grave{a}g$ process and adapted to the completed filtration $(\mathcal{F}_t)_{t\geq 0}$ (generated by $\rho$, the Brownian motion $(B_s)_{s\leq t}$, the weights $\{\omega_i\}_{i\in I}$ and uniform $U$ process)  and the evolving time from $P$ to $P'$, denoted by  $\Delta(P,P')$, is a $Markov~ time$.
   \label{welldefined}
\end{theorem}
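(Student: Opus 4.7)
The plan is to prove all three properties (almost-sure c\`adl\`ag, adaptedness, and Markov-time of $\Delta(P,P')$) simultaneously by structural induction on $P$, following the transition rules in Table~\ref{tab:osemantics} and the standard HCSP rules in the appendix. I will set up the induction over the one-step relation $(P,\rho) \leadm{\alpha} (P',\rho',H)$, using the fact that whenever $\alpha$ is a discrete event or a zero-length internal action, $H$ is supported on the singleton $\{\rho(now)\}$ and $\Delta(P,P')=0$, so all three claims are immediate; this dispatches $\pskip$, assignment, input/output, Boolean guard, the two probabilistic-choice rules (PCho-1)/(PCho-2), and rule (Cont-2) for termination of a stochastic evolution, as well as the interrupt termination rule (IntP-3).

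The substantive cases are the continuous rule (Cont-1) and the interrupt rules (IntP-1)/(IntP-2). For (Cont-1) with duration $d$, the flow $H_d^{\rho,s,X}$ coincides on $[\rho(now),\rho(now)+d]$ with a strong solution $X$ of $ds = b\,dt + \sigma\,dW$ with initial condition drawn from $\rho$; under the standard Lipschitz/linear-growth assumptions on $b,\sigma$, such $X$ exists, has almost surely continuous (hence c\`adl\`ag) paths, and is adapted to the filtration generated by $\rho$ and the Brownian motion up to time $\rho(now)+t$. The only other coordinates changed by $H_d^{\rho,s,X}$ are $now$ (deterministic) and $\textit{rdy}$ (held constant at $\emptyset$), so adaptedness is preserved. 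The evolving time $d$ is either pre-committed (hence trivially a Markov time) or, for the maximal evolution terminated by $\neg B$, equals the first hitting time of the closed set $\{s : \neg B(s)\}$ by the adapted continuous process $X$, which is a classical Markov time (stopping time).

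For the interrupt $\exempt{\langle ds=b\,dt+\sigma\,dW \& B\rangle}{i\in I}{\omega_i\cdot ch_i*}{Q_i}$, the evolving time is the minimum of the SDE's exit time from $B$ and the first instant at which some $ch_i*$ becomes communication-ready; the latter is determined by a reaction of the environment that, by the induction hypothesis applied at the surrounding parallel context, is itself a Markov time with respect to $(\mathcal{F}_t)$. The minimum of two Markov times is a Markov time, and on this random interval the flow is again that of the solution $X$, so adaptedness and c\`adl\`ag-ness carry over from (Cont-1). The probabilistic selection among simultaneously ready channels $\{ch_j*\}_{j\in J}$ with probabilities $\omega_j/\sum_{j\in J}\omega_j$ is resolved by an $\mathcal{F}_0$-independent randomization included in the filtration by hypothesis, so measurability is preserved. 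Sequential composition, Boolean guard, and finite repetition $P^*$ reduce to concatenating two (or finitely many) flows: c\`adl\`ag-ness at the junction holds because the prefix flow ends at the left-limit value and the suffix flow starts right-continuously at the same point; adaptedness and the Markov-time property follow from the strong Markov property and the fact that the sum of two Markov times (the second taken relative to the shifted filtration) remains a Markov time. Parallel composition $S_1 \| S_2$ requires checking that the merged flow on $\rho_1 \uplus \rho_2$ is adapted to the joint filtration and that the synchronization times are Markov times in this joint filtration; this follows by combining the induction hypotheses on $S_1$ and $S_2$ and using that the filtrations generated by disjoint variable sets and disjoint Brownian-motion coordinates can be amalgamated.

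The main obstacle will be the interrupt case together with parallel composition, because the ``first time a communication $ch_i*$ becomes ready'' depends on the behavior of another parallel component and therefore cannot be read off from $P$ alone; one has to formulate the induction invariant in terms of the \emph{joint} filtration of all parallel processes and verify that the local operational rule of each side declares its readiness at a time which is a Markov time in the joint filtration. I expect this to require stating a slightly strengthened induction hypothesis up front (closure of the filtration under the environment's randomness, plus measurability of readiness predicates) and then verifying it component-wise.
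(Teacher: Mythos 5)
Your proof follows essentially the same route as the paper's: a structural induction over the transition rules, dispatching discrete steps as point flows, handling (Cont-1) via the strong SDE solution (continuous hence c\`adl\`ag, adapted to the filtration of $\rho$ and the Brownian motion) with the exit time characterized as a hitting time, and closing the compound cases with the standard closure properties of Markov times and adapted c\`adl\`ag processes. If anything you are more careful than the paper in two places: you treat $P \sqcup_p Q$ as an instantaneous transition (matching (PCho-1)/(PCho-2)), whereas the paper writes $\Delta(P\sqcup_p Q)$ as the sum $p\Delta(P)+(1-p)\Delta(Q)$, and you explicitly flag the need for a joint-filtration induction invariant in the interrupt/parallel case, which the paper asserts without argument.
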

\begin{proof}
   The proof of this theorem can be found  in Appendix.  
\end{proof}

\begin{table}[!ht]
\begin{align}
&{ \fracN{U \mbox{ is a random variable distributed uniformly in }[0,1], \, U(\omega)\le p}
	{(P\sqcup_{p} Q, \rho) \leadm{\tau} \left(P, \begin{array}{r}\rho[tr \mapsto tr \cdot \langle \tau, now\rangle]
	 \end{array}\right)} }& \tag{PCho-1}
	 \\[1mm]
&	{ \fracN{U \mbox{ is a random variable distributed uniformly in }[0,1], \, U(\omega)  > p}
	{(P\sqcup_{p} Q, \rho) \leadm{\tau} \left(Q, \begin{array}{r}\rho[tr \mapsto tr \cdot \langle \tau, now\rangle]
	 \end{array}\right)} }& \tag{PCho-2}
	 \\[1mm]
&	 {\fracN{\begin{array}{l}
	  X :[0,d)\times\Omega\to \mathbb{R}^{d(s)}\mbox{ is the solution of } \\
	   d s = b dt + \sigma dW  \wedge \forall t\in[0,d), \forall \omega. \rho[\now\mapsto \now+t, s \mapsto X_t](\omega)(B) = \ptrue
	   \end{array}}
	  {(\la \textit{ds}=bdt+\sigma dW \& B \ra,
	    \rho) \leadm{d}
	   \left( \begin{array}{l} \la \textit{ds}=bdt+\sigma dW \& B \ra, \\
	      \rho[\now \mapsto \now+d,
	    s \mapsto X_d],
	      H_d^{\rho, s, X}
	     \end{array} \right)} }& \tag{Cont-1}
	 \\[1mm]
&	 {\fracN{ \begin{array}{l}
	 \exists \omega. (\rho(\omega)(B) =   \pfalse) \mbox{ or }
	  ( X :[0,d)\times\Omega\to \mathbb{R}^{d(s)}\mbox{ is the solution of }  \textit{ds}=bdt+\sigma dW ,\\
	   \exists \varepsilon>0
	   \forall t \in (0,\varepsilon)\exists \omega. \rho[\now\mapsto \now+t,s\mapsto X_t](\omega)(B) = \pfalse)
	  \end{array}}
	  {\begin{array}{l} (\la \textit{ds}=bdt+\sigma dW \& B \ra, \rho) \leadm{\tau}
	   (\pstop, \rho[ tr \mapsto tr \cdot \langle \tau, now\rangle )
	     \end{array}} }& \tag{Cont-2}
	  \\[1mm]
&	  \fracN{
	  \begin{array}{l}
	     (ch_i*;Q_i, \rho) \leadm{d} (ch_i*; Q_i, \rho_i', H_i), ~~  \forall i \in I  \\
	  (\la \textit{ds}=bdt+\sigma dW \& B \ra,\rho) \leadm{d} (\la \textit{ds}=bdt+\sigma dW \& B \ra, \rho', H)
	  \end{array}}
	  {     \begin{array}{l}  (\exempt{\la \textit{ds}=bdt+\sigma dW \& B \ra}{i\in I}{\omega_i \cdot ch_i*}{Q_i},
	   \rho)
	   \leadm{d}  \\
	   ~~~~~~~~~~~~~~~~~~~~~~~~~~~~  \left( \begin{array}{l}\exempt{\la \textit{ds}=bdt+\sigma dW \& B \ra}{i\in I}{\omega_i \cdot ch_i*}{Q_i}, \\
	    \rho'[\textit{rdy} \mapsto \cup_{i\in I}\rho_i'(\textit{rdy})], H[\textit{rdy} \mapsto \cup_{i\in I}\rho_i'(\textit{rdy})]
	   \end{array}\right)
	   \end{array}}& \tag{IntP-1}
	   \\[1mm]
&\fracN{
	  \begin{array}{l}
	   \{\overline{ch_{i_k}*}\}_{1\leq k \leq n} \mbox{ get ready simultaneously while others not}\\
	   U \mbox{ is a random variable distributed uniformly in [0,1]},  \mbox{ and for } 1 \leq j \leq n\\
	   \frac{\sum_{k=1}^{j-1}\omega_{i_k}}{\sum_{k=1}^{n}\omega_{i_k}} \leq U(\omega)< \frac{\sum_{k=1}^{j}\omega_{i_k}}{\sum_{k=1}^{n}\omega_{i_k}} \mbox{ and }
	     ( ch_{i_j}*; Q_{i_j},\rho) \leadm{ch_{i_j}*} (Q_{i_j}, \rho')
	  \end{array}}
	  { (\exempt{\la \textit{ds}=bdt+\sigma dW \& B \ra}{i\in I}{\omega_i \cdot ch_i*}{Q_i},
	   \rho) \leadm{ch_{i_j}*}
	   ( {Q_{i_j}},
	    \rho')}& \tag{IntP-2}
	  \\[1mm]
	&   \fracN{
	  \begin{array}{l}
	  (\la \textit{ds}=bdt+\sigma dW \& B \ra, \rho) \leadm{\tau} (\pstop, \rho')
	  \end{array}}
	  {  (\exempt{\la \textit{ds}=bdt+\sigma dW \& B \ra}{i\in I}{\omega_i \cdot ch_i*}{Q_i},
	   \rho) \leadm{\tau}
	   ( \pstop, \rho')}& \tag{IntP-3}\\ \nonumber
\end{align}
\caption{The semantics of new constructs of SHCSP}
\label{tab:osemantics}
\end{table}

\section{Assertions and Specifications}
\label{sec:Specifications}
	In this section, we define a specification logic for reasoning about SHCSP programs. We will first present the assertions including syntax and semantics, and then the specifications based on Hoare triples. The proof system will be given in next section.

\subsection{Assertion Language}
\label{subsec:assertions}
The  assertion language  is essentially defined by a first-order logic with emphasis on the notion of explicit time and the addition of several specific predicates on occurrence of communication traces and events. Before giving the syntax of assertions, we introduce three kinds of expressions first.
\[
\begin{array}{lll}
h & ::= & \varepsilon  \mid \pair{ch.E}{T} \mid h \cdot h \mid h^*\\
 E & ::= & c  \mid x  \mid f^k(E_1, ..., E_k)\\
 T & ::= & o  \mid now \mid u^l(T_1, ..., T_l)
\end{array}\]
$h$ defines trace expressions, among which $\pair{ch.E}{T}$ represents that there is a value $E$ transmitted along channel $ch$ at time $T$. $E$ defines value expressions, including a value constant $c$, a variable $x$, or arithmetic value expressions. $T$ defines time expressions, including a time constant $o$, system variable $now$, or arithmetic time expressions.

The  categories of the assertion language include terms, denoted by $\termt, \termt_1 $ \emph{etc.}, state formulas, denoted by $S, S_1$ \emph{etc.},  formulas, denoted by $\formt, \formt_1$ \emph{etc.}, and probability formulas, denoted by $\mathcal{P}$  \emph{etc.}, which are given by the following BNFs:

\[\begin{array}{lll}
\termt & ::= & E  \mid  T \mid h \mid tr\\
S & ::= &  \bot \mid R^n(\termt_1, ..., \termt_n) \mid   h.ch? \mid h.ch! \mid \neg S \mid S_1 \vee S_2 \\
\formt &::=& \bot  \mid \rdyat{S}{T} \mid \neg \formt \mid \formt_1 \vee \formt_2 \mid \forall v. \formt \mid \forall t. \formt\\
\mathcal{P} &::=& P(\formt)\bowtie p \mid \neg \mathcal{P} \mid \mathcal{P} \vee \mathcal{P}
\end{array}\]
The terms $\termt$ include value, time and trace expressions, plus trace variable $tr$. The state expressions $S$  include false (denoted by $\bot$), truth-valued relation $R^n$ on terms, readiness, and logical combinations of state formulas. In particular, the readiness  $h.ch?$ or $h.ch!$ represents that the communication event $ch?$ or $ch!$ is enabled, and prior to it, the sequence of communications recorded in $h$ has occurred. The formulas $\varphi$ include false, a  primitive $\rdyat{S}{T}$ representing that $S$ holds at time $T$; and logical combinations of formulas ($v,t$ represent logical variables for values and time resp.). For time primitive, we have an axiom that $(\rdyat{S_1}{T}\wedge \rdyat{S_2}{T}) \equi \rdyat{(S_1\wedge S_2)}{T}$. We omit all the other axiom and inference rules for the formulas, that are same to first-order logic. The probability formula $\mathcal{P}$ has the form $P(\formt)\bowtie p$, where $\bowtie \in \{<, \leq, >, \geq \}$, $p\in \mathbb{Q}\cap [0,1]$, or the logical composition of probability formulas free of quantifiers. In particular,  $P(\formt)\bowtie p$ means that  $\formt$ is true with probability $\bowtie p$. For the special case $P(\formt)=1$, we write $\formt$ for short.

In the sequel, we use the standard logical abbreviations, as well as
\[
\begin{array}{c}
  \rdyduring{\formt}{[T_1,T_2]}  \!\Define  \!\forall t. (T_1 \!\leq t\!\leq T_2) \!\Rightarrow \!\rdyat{\formt}{t} \\
  \rdyin{\formt}{[T_1, T_2]}  \!\Define  \exists t. (T_1 \!\leq t\!\leq T_2) \!\wedge  \!\rdyat{\formt}{t}
\end{array}\]

\paragraph{\textbf{Interpretation}}
In the following, we will  use a random variable $Z:\Omega \to (\textit{Var} \to \textit{Val})$ to describe the current state and a stochastic process $\timefun: [0, +\infty) \times \Omega \to (\textit{Var} \to \textit{Val})$ to represent the whole evolution. 
The semantics of a term $\theta$ is a function $\newsem{\theta}: (\Omega \to (\textit{Var}  \to \textit{Val})) \to (\Omega \to \textit{Val})$ that maps any random variable $Z$ to a random variable $\newsem{\theta}^Z$, defined as follows:
\[\begin{array}{l}
\newsemz{c}=c \\
\newsemz{x}=Y \mbox{ where } Y(\omega)=Z(\omega)(x) \mbox{ for } \omega \in \Omega \\
\newsemz{f^k(E_1, ..., E_k)}=f^k(\newsemz{E_1}, ..., \newsemz{E_k}) \\
\newsemz{o}=o \\
\newsemz{now}=Y \mbox{ where } Y(\omega)=Z(\omega)(now) \mbox{ for } \omega \in \Omega \\
\newsemz{u^l(T_1, ..., T_l)}=u^l(\newsemz{T_1}, ..., \newsemz{T_l}) \\
\newsemz{\varepsilon}=\varepsilon \\
\newsemz{\pair{ch.E}{T}}=\pair{ch.\newsemz{E}}{\newsemz{T}} \\
\newsemz{h_1 \cdot h_2}=\newsemz{h_1} \cdot \newsemz{h_2} \\
\newsemz{h^*}= (\newsemz{h})^*
\end{array}\]
The semantics of state formula $S$ is a function $\newsem{S}: (\Omega \to (\textit{Var}  \to \textit{Val})) \to (\Omega \to \{0, 1\})$ that maps any random variable $Z$ describing the current state to a boolean random variable $\newsemz{S}$, defined as follows:
\[\begin{array}{l}
\newsemz{\bot}=0 \\
\newsemz{R^n(\termt_1, \dots, \termt_n)}= R^n(\newsemz{\termt_1}, \dots, \newsemz{\termt_n})\\
\mbox{where } R^n(\newsemz{\termt_1}, \dots, \newsemz{\termt_n})(\omega) =
  R^n(\newsemz{\termt_1}(\omega), \dots, \newsemz{\termt_n}(\omega))\\
\newsemz{h.ch?}=\mathcal{I}_{\{\omega \in \Omega |\newsemz{h}(\omega).ch? \in Z(\omega)(rdy) \}} \\
\newsemz{h.ch!}=\mathcal{I}_{\{\omega \in \Omega |\newsemz{h}(\omega).ch! \in Z(\omega)(rdy) \}} \\
\newsemz{\neg S}= 1-\newsemz{S} \\
\newsemz{S_1 \vee S_2} = \newsemz{S_1} + \newsemz{S_2} - \newsemz{S_1} * \newsemz{S_2}
\end{array}\]
where given a set $S$,  the characteristic function $\mathcal{I}_S$ is defined such that $\mathcal{I}_S(w) =1$ if $w \in S$ and $\mathcal{I}_S(w) = 0$ otherwise. The semantics of formula $\formt$ is interpreted over a stochastic process and  an initial random variable. More precisely, it's a function $\newsem{\formt}: ([0, +\infty)\times \Omega \to (\textit{Var}  \to \textit{Val})) \to (\Omega \to (\textit{Var}  \to \textit{Val})) \to  (\Omega \to \{0, 1\}) $ that maps a stochastic process $\timefun$ with initial state $Z$ to a boolean random variable $\newsemhz{\formt}$. The definition is given below:
\[\begin{array}{l}
\newsemhz{\bot}=0 \\
\newsemhz{\rdyat{S}{T}}= \Lbrack S \Rbrack^{\timefun(\newsemz{T})}\\
\newsemhz{\neg \formt} =1- \newsemhz{\formt} \\
\newsemhz{\formt_1 \vee \formt_2}=\newsemhz{\formt_1}+\newsemhz{\formt_2}- \newsemhz{\formt_1}*\newsemhz{\formt_2}\\
\newsemhz{\forall v.\formt }=\mbox{inf}\{\newsemhz{\formt[b/v]} :b \in \mathbb{R} \}  \\
\newsemhz{\forall t.\formt }=\mbox{inf}\{\newsemhz{\formt[b/t]} : b\in \mathbb{R^+} \}
\end{array}\]
The semantics of probability formula $\mathcal{P}$ is defined by function $\newsem{\mathcal{P}}: ([0, +\infty)\times \Omega \to (\textit{Var}  \to \textit{Val})) \to (\Omega \to (\textit{Var}  \to \textit{Val})) \to   \{0, 1\} $ that maps a stochastic process $\timefun$ with initial state $Z$ to a boolean variable $\newsemhz{\mathcal{P}}$. Formally,
\[\newsemhz{P(\formt)\bowtie p} = (P(\newsemhz{\formt}= 1)= P(\{\omega \in \Omega :\newsemhz{\formt}(\omega)= 1\}) \bowtie p)\]
The semantics for $\neg$ and $\vee$ can be defined as usual.

We have proved that the terms and formulas of the assertion language are measurable, stated by the following theorem:
\begin{theorem}[Measurability]
  For any random variable $Z$ and any stochastic process $\timefun$, the semantics of $\newsemz{\theta}$, $  \newsemz{S}$ and $\newsemhz{\formt}$ are random variables (i.e. measurable).
  \label{Them:Mea}
\end{theorem}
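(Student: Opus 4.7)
The natural route is a simultaneous structural induction on terms~$\theta$, state formulas~$S$, and formulas~$\formt$. At each level I would first verify the base cases are measurable functions of $Z$ (or of $\timefun$ and $Z$), and then show that each syntactic constructor preserves measurability.

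For terms, the constants $c,o,\varepsilon$ yield constant random variables; the variable case $\newsemz{x}(\omega)=Z(\omega)(x)$ is the composition of $Z$ with the coordinate projection, and so measurable once $\mathit{Var}\to\mathit{Val}$ carries its product $\sigma$-algebra. The arithmetic constructors $f^k,u^l$ are assumed Borel, so the inductive hypothesis plus composition suffices. The trace constructors $\pair{ch.E}{T}$, concatenation, and $h^*$ are handled by equipping $\Tevent^*_\le$ with the obvious $\sigma$-algebra (countable disjoint union over length, on each length the product Borel structure), so that pairing and concatenation are measurable, and $h^*=\bigcup_{n\ge 0}h^n$ reduces iteration to a countable union. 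For state formulas, $\bot$ is constant, $R^n$ is Borel by assumption, and for the readiness predicate $h.ch?$ the set $\{\omega:\newsemz{h}(\omega).ch?\in Z(\omega)(\mathit{rdy})\}$ is measurable because it is the preimage of the (measurable) membership relation under the pair $(\newsemz{h},Z(\cdot)(\mathit{rdy}))$. Negation and disjunction translate into $1-f$ and $f+g-fg$ on $\{0,1\}$-valued measurable functions, hence preserve measurability.

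For formulas, $\bot$, $\neg$, and $\vee$ are immediate as above. The delicate case is $\rdyat{S}{T}$, where the semantics requires evaluating the path $\timefun(\cdot,\omega)$ at the random time $\newsemz{T}(\omega)$ and then applying $\newsemhh{S}{\cdot}$ to the resulting random state. Here I would invoke Theorem~\ref{welldefined}: since $\timefun$ is a.s.\ c\`adl\`ag and adapted, it is in particular progressively measurable, so the substitution $(\omega)\mapsto \timefun(\newsemz{T}(\omega),\omega)$ is $\mathcal{F}$-measurable into $(\mathit{Var}\to\mathit{Val})$; composing with the inductively measurable $\newsemhh{S}{\cdot}$ closes the step. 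For the quantifier cases I would use the standard reduction
\[
\newsemhz{\forall v.\formt}(\omega)=\inf_{b\in\mathbb{R}}\newsemhz{\formt[b/v]}(\omega),
\]
noting that on $\{0,1\}$-valued functions the infimum is an uncountable intersection of measurable sets; to keep this in $\mathcal{F}$ I would exploit the syntactic fact that $\formt[b/v]$ depends on $b$ only through Borel value/time expressions, so the set $\{b:\newsemhz{\formt[b/v]}(\omega)=1\}$ is of a form whose complement is Borel and whose infimum over $\mathbb{R}$ coincides with its infimum over $\mathbb{Q}$ (density), reducing to a countable intersection. The same argument handles $\forall t$ with $\mathbb{R}^+$ replaced by $\mathbb{Q}^+$.

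The main obstacle, as suggested above, is precisely the universal quantifier: uncountable infima of measurable $\{0,1\}$-valued functions are in general not measurable, and making the reduction to a countable dense set rigorous requires a separate lemma that $b\mapsto\newsemhz{\formt[b/v]}(\omega)$ is upper (or lower) semicontinuous in $b$ for every $\omega$, proved by an auxiliary induction on $\formt$ using the continuity of the arithmetic primitives in $E$ and $T$, the right-continuity of the c\`adl\`ag flow $\timefun$, and the fact that $\mathit{rdy}$-membership is a closed condition. Once this auxiliary semicontinuity lemma is established, every other inductive step is routine, and Theorem~\ref{Them:Mea} follows by collecting the three simultaneous inductions together with the completeness of the filtration assumed in Section~2.
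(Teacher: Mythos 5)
Your proposal follows essentially the same route as the paper's proof: a simultaneous structural induction on $\theta$, $S$ and $\formt$, in which every constructor except the universal quantifier is routine, and the quantifier case is resolved by reducing the uncountable infimum to a countable one over the rationals using the $c\grave{a}dl\grave{a}g$ and adaptedness properties guaranteed by Theorem~\ref{welldefined}. You are in fact somewhat more careful than the paper on the two genuinely delicate points --- invoking progressive measurability to justify evaluating $\timefun$ at the random time $\newsemz{T}$ in the $\rdyat{S}{T}$ case, and isolating the semicontinuity-in-$b$ lemma needed to make the passage from $\inf_{b\in\mathbb{R}}$ to $\inf_{b\in\mathbb{Q}}$ rigorous, where the paper only appeals to rational meshes and the $c\grave{a}dl\grave{a}g$ property of $\timefun$.
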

\begin{proof}
   The proof of this theorem is given in Appendix.
\end{proof}

\subsection{Specifications}
Based on the assertion language, the specification for a SHCSP process $P$ is defined as a Hoare triple of the form $\triple{A; E}{P}{R; C}$, where $A, E, R, C$ are probability formulas. $A$ and $R$ are \textit{precondition} and \textit{postcondition}, which specify the initial state and the terminating state of $P$  respectively. For both of them, the formulas $\formt$ occurring in them have the special form $\rdyat{S}{now}$, and we will write  $S$ for short.
$E$ is called an \emph{assumption} of $P$, which expresses the timed occurrence of the dual of communication events provided by the environment.
 $C$ is called a \emph{commitment} of $P$, which expresses the timed occurrence of communication events, and the real-time properties of  $P$.

\begin{definition}[Validity]
We say a Hoare triple \sm{\triple{A; E}{P}{R; C}} is \emph{valid}, denoted by \sm{\models \triple{A; E}{P}{R; C},}
iff for any process \sm{Q,}  any initial states \sm{\funI_1} and \sm{\funI_2,} if  \sm{P} terminates, i.e.\sm{(P \| Q, \funI_1 \uplus \funI_2) \leadm{\alpha^*} (\pstop\|Q', \funI_1' \uplus \funI_2',\timefun)}
then \sm{\Lbrack A \Rbrack^{\funI_1}} and \sm{\Lbrack E \Rbrack^{\timefun,\funI_2}} imply
\sm{\Lbrack R \Rbrack^{\funI_1'}} and
\sm{\Lbrack C \Rbrack^{\timefun,\funI_1'},} where  $\timefun$ is the stochastic process of the evolution.
\end{definition}

%

\section{Proof System}
\label{sec:Proofsystem}
We present a proof system for reasoning about all valid Hoare triples for SHCSP processes. First we axiomatize SHCSP language by defining the axioms and inference rules for all the primitive  and compound constructs, and then the general rules and axioms that are applicable to all processes.



\prag{Skip} The rule for skip is very simple. Indicated by $\top$, the skip process requires nothing from the environment for it to execute, and guarantees nothing during its execution.
\[\triple{A; \top}{\pskip}{A; \top}\]

\prag{Assignment}
The assignment $x:=e$ changes nothing but assigns $x$ to $e$ in the final state, taking no time to complete. \[\triple{A[e/x]; \top}{x:=e}{A; \top}\]

\prag{Input}
For input $ch?x$, we use logical variables $o$ to denote the starting time, $h$  the initial trace, and $v$ the initial value of $x$ respectively,  in the precondition. The assumption indicates that the compatible output event is not ready during $[o, o_1)$, and at time $o_1$, it becomes ready. As a consequence of the assumption, during the whole interval $[o, o_1]$, the input event keeps waiting and ready, as indicated by the commitment. At time $o_1$, the communication occurs and terminates immediately. As indicated by the postcondition, $x$ is assigned by some value $v'$ received,  the trace is augmented by the new pair $\pair{ch.v'}{o_1}$, and $now$ is increased to $o_1$.
Assume $A$ does not contain $tr$ and $o_1$ is finite (and this assumption will be adopted for the rest of the paper). Let $h'$ be $h[v/x,o/now] \cdot \pair{ch.v'}{o_1}$, the rule is presented as follows:
\[\ltriple{A \wedge now = o \wedge tr=h \wedge x=v ; \rdyduring{\neg h.ch!}{[o,o_1)} \wedge \rdyat{h.ch!}{o_1}}{ch?x}
{\{A[o/now] \wedge now = o_1 \wedge \exists v'. (x=v' \wedge tr=h'); \rdyduring{h.ch?}{[o, o_1]}\}}\]
A communication event is equivalent to a sequential composition of a wait statement and an assignment, both of
which are deterministic. Thus, as shown above, the formulas related to traces and readiness hold with probability 1.

If such finite $o_1$ does not exist, i.e., the compatible output event will never become available.
As a consequence, the input event will keep waiting forever, as shown by the following rule:
\[\ltriple{A \wedge now = o \wedge tr=h; \rdyduring{\neg h.ch!}{[o, \infty)}}{ch?x}
{\{A[o/now] \wedge now = \infty; \rdyduring{h.ch?}{[o, \infty)}\}}\]

\prag{Output}
Similarly, for output $ch!e$, we have one rule for the case when the compatible input event becomes ready in finite time. Thus the communication occurs successfully.
 \[\ltriple{A \wedge now = o \wedge tr=h; \rdyduring{\neg h.ch?}{[o,o_1)} \wedge \rdyat{ h.ch?}{o_1}}{ch!e}
{\{A[o/now] \wedge now = o_1 \wedge tr = h[o/now] \cdot \pair{ch.e}{o_1},  \rdyduring{ h.ch!}{[o, o_1]}\}}\]
We also have another rule for the case when the compatible input event will never get ready.
\[\ltriple{A \wedge now = o \wedge tr=h; \rdyduring{ (\neg h.ch?)}{[o,\infty)} }{\, ch!e\, }
{\{A[o/now] \wedge now = \infty; \rdyduring{h.ch!}{[o,\infty)}\} } \]

\prag{Stochastic Differential Equation}
Let $f$ be a function, and $\lambda > 0, p \geq 0$ are real values. We have the following rule for $\langle d s = b dt + \sigma dW \& B\rangle$.
\[\fracN{\begin{array}{c}
  f(s) \in C^2(\mathbb{R}^n, \mathbb{R}) \mbox{ has compact support on } B,
   \lambda,p > 0\mbox{ and }  \\ A \rightarrow B\to (f\leq \lambda p)  \quad B \to (f\geq 0) \wedge( Lf \leq 0)
\end{array}}
{\begin{array}{c}
 \{A \wedge s=s_0 \wedge now=o; \top\}\langle d s = b dt + \sigma dW \& B\rangle
      \{P(f(s)\geq \lambda) \leq p \wedge A[s_0/s, o/now] \\
      \wedge now=o+d \wedge \close(B); B \wedge P(\rdyduring{f(s)\geq \lambda  }{[o, o+d]})\leq p\}
\end{array}
}\]
where $o, s_0$ are logical variables denoting the starting time and the initial value of $s$ resp., $d$ is the execution time of the SDE, and $\close(B)$ returns the closure of $B$, e.g. $\close(x<2) = x\leq 2$;  and the Lie derivative $Lf(s)$ is defined as $\mathop{\sum}\limits_{i}b_i(s)\frac{\partial f}{\partial s_i}(s)+\frac{1}{2}\mathop{\sum}\limits_{i,j}(\sigma (s)\sigma(s)^T)_{i,j}\frac{\partial^2 f}{\partial s_i\partial s_j}(s)$. The rule states that, if the initial state of the SDE satisfies $f \leq \lambda p$, and in the domain $B$, $f$ is always non-negative and $Lf$ is non-positive, then during the whole evolution of the SDE, the probability of $f(s) \geq \lambda$ is less than or equal to $p$; on the other hand, during the evolution, the domain $B$ holds almost surely, while at the end, the closure of $B$ holds almost surely.

\prag{Sequential Composition}
For $P; Q$, we use $o$ to denote the starting time, and $o_1$ the termination time of $P$, if $P$ terminates, which is also the starting time of $Q$. The first rule is for the case when $P$ terminates.
\[\fracN{\begin{array}{c}
  \triple{A \wedge now=o; E}{P}{R_1 \wedge now = o_1; C_1} ~~
 \triple{R_1 \wedge now=o_1; C_1}{Q}{R; C}
\end{array}}
{\triple{A; E}{P; Q}{R; C}}\]
On the other hand, if $P$ does not terminate, the effect of executing $P; Q$ is same to that of executing $P$ itself.
\[\fracN{\begin{array}{c}
  \triple{A \wedge now=o; E}{P}{R \wedge now = \infty; C}
\end{array}}
{\triple{A \wedge now=o; E}{P; Q}{R \wedge now = \infty; C}}\]

\prag{Conditional}
There are two rules depending on whether $B$ holds or not initially.
\[\begin{array}{lcr}
 \fracN{A \Rightarrow B \quad \triple{A; E}{P}{R; C}}{\triple{A; E}{B \rightarrow P}{R; C}}
  & \mbox{ and } &
\fracN{A \Rightarrow \neg B }{\triple{A; \top}{B \rightarrow P}{A; \top}}
\end{array}
\]

\prag{Probabilistic Choice}
The rule for $P  \sqcup_p Q$ is defined as follows:
\[\fracN{
\begin{array}{c}
\triple{A \wedge now=o; E}{P}{P(S) \bowtie_1 p_1; P(\formt) \bowtie_2 p_2}  \\
\triple{A \wedge now=o; E}{Q}{P(S) \bowtie_1 q_1; P(\formt) \bowtie_2 q_2}
\end{array}}
{\triple{A \wedge now=o; E}{~ P  \sqcup_p Q~ }{P(S) \bowtie_1 pp_1+(1-p)q_1; P(\formt) \bowtie_2 pp_2+(1-p)q_2}}\]
where $\bowtie_1, \bowtie_2 $ are two relational operators. The final postcondition indicates that, if after $P$  executes $S$ holds with probability $\bowtie_1 p_1$, and after $Q$  executes $S$ holds with probability $\bowtie_1 q_1$, then after $P  \sqcup_p Q$ executes, $S$ holds with probability $\bowtie_1 pp_1+(1-p)q_1$; The history formula can be understood similarly.

\prag{Communication Interrupt}
We define the rule for the special case $\exemptS{\langle d s = b dt + \sigma dW \& B\rangle}{ch?x}{Q}$ for simplicity,
which can be generalized to general case without any difficulty.
We use $o_F$ to denote the execution time of the SDE. The premise of the first rule indicates
that the compatible event (i.e. $h.ch!$) is not ready after the continuous terminates.
For this case, the effect of executing the whole process
is thus equivalent to that of executing the SDE.
\[
\fracN{ \begin{array}{c}
  \{A \wedge now = o; E\} \langle d s = b dt + \sigma dW \& B\rangle \{R \wedge now = o + o_F; C\} \\
       A \wedge now=o \wedge E \imply (tr=h\wedge  \rdyduring{\neg h.ch!}{[o, o+o_F]})
   \end{array} }
{ \begin{array}{l}
   \{A \wedge now = o; E\}~ \exemptS{\langle d s = b dt + \sigma dW \& B\rangle}{ch?x}{Q} ~\{ R \wedge now = o + o_F; C\}
       \end{array} }
 \]
In contrary, when the compatible event gets ready before
the continuous terminates,
the continuous will be interrupted by the communication, which is
then followed by $Q$. Thus, as shown in the following rule, the effect of executing
the whole process is equivalent to that of executing $ch?x;  Q$, plus that of
executing the $SDE$ before the communication occurs, i.e. in the first $o_1$ time units.
\[
\fracN{ \begin{array}{c}
  \{A \wedge now = o; E\}\langle d s = b dt + \sigma dW \& B\rangle \{R \wedge now = o + o_F; C\} \\
      (A \wedge now=o  \wedge E) \imply ( tr=h\wedge \rdyat{h.ch!}{(o + o_1)} \wedge o_1 \leq o_F )\\
  \triple{A \wedge B \wedge now = o; E}{ ch?x; Q}{R_1; C_1}
   \end{array} }
{ \begin{array}{l}
   \{A \wedge now = o; E\}\,  \exemptS{\langle d s = b dt + \sigma dW \& B\rangle}{ch?x}{Q} \, \\
   \qquad\qquad\qquad\qquad\qquad\{R_1;  R|_{[o, o+o_1)}\wedge C_1 \}
       \end{array} }
 \]
where $R|_{[o, o+o_1]}$ extracts from $R$ the formulas before $o+o_1$, e.g., $(P(\rdyat{S}{T})\bowtie p)|_{[o, o+o_1]}$ is equal to $P(\rdyat{S}{T})\bowtie p$ if $T$ is less or equal to $o+o_1$, and true otherwise.

\prag{Parallel Composition}

For \sm{P\|Q}, let \sm{X} be \sm{X_1 \cap X_2} where \sm{X_1 = \Sigma(P)} and \sm{X_2 = \Sigma(Q)}, then \\[1mm]
{\small \hspace*{1cm}  $\fracN{\begin{array}{c}
 A \Rightarrow A_1 \wedge A_2, \quad \triple{A_1 \wedge \now=o; E_1}{P}{R_1 \wedge tr=\trace_1\wedge \now=o_1; C_1} \\
   \triple{A_2\wedge \now=o; E_2}{Q}{R_2 \wedge tr=\trace_2\wedge \now=o_2; C_2}\\
\forall ch \in X. (\fdelta{C_1}{o_1/\now}\!\upharpoonright_{ch} \Rightarrow E_2\!\upharpoonright_{ch}) \wedge (\fdelta{C_2}{o_2/\now}\!\upharpoonright_{ch} \Rightarrow E_1\!\upharpoonright_{ch})\\
\forall dh \in X_1\setminus X.  E\!\upharpoonright_{dh} \Rightarrow E_1\!\upharpoonright_{dh} \quad \forall dh' \in X_2\setminus X.  E\!\upharpoonright_{dh'} \Rightarrow E_2\!\upharpoonright_{dh'}
\end{array}}
{\triple{A\wedge \now=o; E}{P  \| Q}{R; C'_1 \wedge C'_2}} $ } \\[1mm]
where \sm{A_1} is a property of \sm{P} (i.e., it only contains variables of \sm{P}),  \sm{A_2} a property of \sm{Q}, and
 \sm{o_1} and \sm{o_2}, \sm{\gamma_1} and \sm{\gamma_2}  logical variables representing the time and trace at  termination of \sm{P} and \sm{Q} respectively. Let \sm{o_m} be \sm{\max\{o_1, o_2\}},
\sm{R}, \sm{C'_1} and \sm{C'_2} are defined as follows: \\[1mm]
 {\small $\begin{array}{lll}
    R  &\Define& R_1[\trace_1/tr, o_1/\now] \wedge R_2[\trace_2/tr,o_2/\now] \wedge \now = o_m \wedge \trace_1\!\upharpoonright_{X} = \trace_2\!\upharpoonright_{X}\wedge tr=\trace_1 \spara \trace_2 \\[-2mm]
    C'_i &\Define&  \fdelta{C_i}{o_i/\now} \wedge \rdyduring{R_i'[o_i/\now]}{[o_i, o_m)} \mbox{  for \sm{i=1,2}}
    \end{array} $ } \\
where for \sm{i=1, 2}, \sm{R_i \Rightarrow R_i'} but \sm{tr \notin R_i'}.
At  termination of \sm{P \| Q}, the time will be
the maximum of \sm{o_1} and \sm{o_2}, and the trace will be
the alphabetized parallel of the traces of \sm{P } and
\sm{Q}, i.e. \sm{\gamma_1, \gamma_2}. In \sm{C'_1} and \sm{C'_2},
we specify that  none of variables of \sm{P} and \sm{Q} except for \sm{\now} and \sm{tr} will change after their termination.

\prag{Repetition}
For \sm{P^*}, let \sm{k} be an arbitrary non-negative integer, then ($tr \notin A$) \\[1mm]
{\small $\fracN{
\begin{array}{c}
 \{A\wedge \now=o + k*t \wedge tr=(h \cdot \alpha^k); \fdelta{E}{o/\now}\}\ P\ \\
 \qquad \qquad \qquad\{A\wedge \now=o + (k+1)*t \wedge tr=(h \cdot \alpha^{k+1}); C\}
\end{array}}{\triple{A \wedge \now=o \wedge tr=h ; E}{P^*}{A \wedge \now=o'\wedge tr = (h \cdot \alpha^*)+\tau; C \vee (\rdyat{o=o'}{\now})}} $ }\\[1mm]
 \sm{t} and \sm{\alpha} are logical variables representing the time elapsed and trace accumulated respectively by each execution of \sm{P}, and \sm{o} and \sm{o'} denote the starting and termination time
 of the loop (\sm{o'} could be infinite).
%

The general rules that are applicable to all processes, such as Monotonicity, Case Analysis, and so on, are similar to the traditional Hoare Logic. We will not list them here for page limit.

\begin{theorem}[Soundness]
 If $\vdash \triple{A; E}{P}{R; C}$, then $\models \triple{A; E}{P}{R; C}$, i.e. every theorem of the proof system is valid.
\label{Theorem:Soundness}
\end{theorem}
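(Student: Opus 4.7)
The plan is to prove soundness by structural induction on the derivation $\vdash \triple{A; E}{P}{R; C}$, showing for each axiom and inference rule of the proof system that validity of the premises implies validity of the conclusion. Because the operational semantics is given in terms of random variables and stochastic processes, each case amounts to checking that the probability formulas in the postcondition and commitment are respected by the distribution induced on the evolution of $P$ from the initial random variable $\rho$, the driving Brownian motion, and the uniform choices $U$. Theorem~\ref{welldefined} guarantees that the recorded flow $H$ is adapted and $c\grave{a}dl\grave{a}g$ and that the termination time $\Delta(P,P')$ is a Markov time, while Theorem~\ref{Them:Mea} ensures that every term or formula appearing in an assertion is measurable, so all probabilities in the soundness statements are well defined.

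First I would dispatch the deterministic primitive rules (\textbf{Skip}, \textbf{Assignment}, \textbf{Input}, \textbf{Output}) by direct inspection of the operational rules: each yields a single sample path with probability $1$, so the probability formulas collapse to classical truth values and the triples reduce to standard Hoare-logic reasoning together with bookkeeping on $tr$, $now$, and $rdy$ as dictated by the environment assumption $E$. Next I would handle the compound rules by invoking the induction hypothesis on the sub-derivations: sequential composition chains the final state distribution of $P$ with the initial distribution of $Q$; \textbf{Conditional} splits on the value of $B$ at $now=o$; \textbf{Probabilistic Choice} applies the law of total probability, partitioning $\Omega$ into $\{U\le p\}$ and $\{U>p\}$ of probability $p$ and $1-p$ and independent of the subsequent execution, so that any probability bound of the form $P(S)\bowtie_1 p_1$ after $P$ and $P(S)\bowtie_1 q_1$ after $Q$ combines into $P(S)\bowtie_1 pp_1+(1-p)q_1$ after $P\sqcup_p Q$; \textbf{Repetition} is handled by an auxiliary induction on the number of iterations. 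For \textbf{Parallel Composition} I would verify via the definition of $\trace_1\spara\trace_2$ that each side's environment assumption is discharged on shared channels $X$ by the commitment of the other side and on private channels by the outer assumption $E$.

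The main obstacle is the rule for the SDE $\la ds=b\,dt+\sigma dW\,\&\,B\ra$, whose soundness hinges on a supermartingale argument. Given $f\in C^2$ with compact support on $B$ satisfying $f\ge 0$ and $Lf\le 0$ on $B$, an application of It\^o's formula yields
\[ f(X_t) = f(X_0) + \int_0^t Lf(X_s)\,ds + \int_0^t (\nabla f(X_s))^T \sigma(X_s)\,dW_s, \]
so the stopped process $f(X_{t\wedge \tau_B})$ is a non-negative supermartingale with respect to $(\mathcal{F}_t)_{t\ge 0}$, where $\tau_B$ is the exit time from $B$ (a Markov time by Theorem~\ref{welldefined}). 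Doob's maximal inequality then gives
\[ P\!\left(\sup_{0\le t\le d} f(X_{t\wedge\tau_B}) \ge \lambda\right) \le \frac{E[f(X_0)]}{\lambda} \le \frac{\lambda p}{\lambda} = p, \]
which simultaneously establishes the postcondition bound $P(f(s)\ge\lambda)\le p$ and the history bound $P(\rdyduring{f(s)\ge\lambda}{[o,o+d]})\le p$. The $B$-almost-surely part of the commitment and the $\close(B)$ part of the postcondition then follow from rules (Cont-1) and (Cont-2) of the operational semantics.

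Finally, the \textbf{Communication Interrupt} rule is obtained by composing the SDE case with the input/sequential-composition cases through a case split on the event $\{o_1\le o_F\}$ versus its complement, using rules (IntP-1)--(IntP-3): on the first event the interrupt reduces to the SDE followed by $ch?x;Q$ applied at time $o+o_1$, and the truncation operator $R|_{[o,o+o_1]}$ correctly restricts the SDE's commitment to the pre-interrupt interval; on the complement the behaviour is that of the pure SDE, which is exactly the first premise. With all rules discharged, the induction closes and the theorem follows.
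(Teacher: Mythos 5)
Your proposal follows essentially the same route as the paper's proof: structural induction over the derivation, direct inspection for the deterministic primitives, the law of total probability for probabilistic choice, a compatibility-check argument for parallelism, and—crucially—establishing that $f(X_t)$ is a non-negative supermartingale on $B$ from $Lf\le 0$ and then applying Doob's maximal inequality to obtain the bound $p$ for the SDE rule. The only cosmetic difference is that you derive the supermartingale property via It\^o's formula on the stopped process, whereas the paper goes through Dynkin's formula and the strong Markov property; these are interchangeable here.
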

\begin{proof}
   The proof of this theorem can be found in Appendix.
\end{proof}
\begin{example}
  For the aircraft example, define $f(x, y)$ as $|y|$, assume $f(x_s, y_0) = |y_0| \leq \lambda p$, where $p \in [0, 1]$. Obviously, $B \rightarrow (f\geq 0) \wedge (Lf \leq 0)$ holds. By applying the inference rule of SDE, we have the following result:
  \[\triple{now=o; True}{P_{Air}}{
  \begin{array}{ll}
\exists d. now = o+d  \wedge B \wedge  P(f \geq \lambda)   s\leq p;\\
 B \wedge P(\rdyduring{f \geq \lambda  }{[o, o+d]})\leq p
  \end{array}
  }
  \]
which shows that, the probability of the aircraft entering the dangerous state is always less than or equal to $p$ during the flight. Thus, to guarantee the safety of the aircraft, $p$ should be as little as possible. For instance, if the safety factor of the aircraft is required to be $99.98 \%$, then $p$ should be less than or equal to 0.0002, and in correspondence, $|y_0| \leq \frac{\lambda}{5000}$ should be satisfied.
\end{example}

\section{Conclusion}

This paper presents stochastic HCSP (SHCSP) for modelling hybrid systems with probability and stochasticity.
SHCSP is expressive but complicated with interacting discrete, continuous and stochastic dynamics. 
We have defined the semantics of stochastic HCSP and proved that it is well-defined with respect to stochasticity. We propose an assertion language for specifying time-related and probability-related  properties of SHCSP, and have proved the measurability of it. Based on the assertion language, we define a compositional Hoare Logic for specifying and verifying SHCSP processes. The logic is an extension of traditional Hoare Logic, and can be used to reason about how the probability of a property changes with respect to the execution of a process. To illustrate our approach, we model and verify a case study on a flight planing problem at the end.

\bibliographystyle{abbrv}
\bibliography{references}

\newpage

\section*{Appendix}

\subsection{The Semantics of SHCSP}

\begin{table}
\small
\centering
\begin{align}
 &(\pskip, \rho) \leadm{\tau} (\pstop,
	\rho[tr\mapsto tr \cdot \langle \tau, now\rangle])   &\tag{Skip}
\\[1mm]\nonumber
&(\pstop, \rho) \leadm{d} (\pstop,
	\rho[\now\mapsto \now+d])   &\tag{Idle}
\\[1mm]\nonumber
&	{  (x:=e, \rho) \leadm{\tau}
	(\pstop, \rho[x\mapsto e,tr \mapsto tr \cdot \langle \tau, now\rangle ])}&\tag{Assign}
\\[1mm]\nonumber
&	{ \fracN{ \rho(\omega)(tr).ch? \not\in \rho(\omega)(\textit{rdy}) }
	{(ch?x, \rho) \leadm{\tau} (ch?x, \rho[\textit{rdy} \mapsto \textit{rdy}\cup\{tr.ch?\}])} } & \tag{In-1}
	 \\[1mm]\nonumber
&	{ \fracN{ \rho(\omega)(tr).ch? \in \rho(\omega)(\textit{rdy})}
	{(ch?x, \rho) \leadm{d} \left(ch?x,\rho[\now \mapsto \now+d], H_d^\rho
	   \right)} }& \tag{In-2}
	\\[1mm]
&	{ \fracN{\rho(\omega)(tr).ch? \in \rho(\omega)(\textit{rdy})}
	{(ch?x, \rho) \leadm{ch?b} \left(\pstop, \begin{array}{r}\rho[ \textit{rdy} \mapsto \textit{rdy}\backslash \{tr.ch?\}, x \mapsto b,
	   tr \mapsto tr\cdot \langle ch.b, now\rangle ]
	 \end{array}\right)} }& \tag{In-3}
	 \\[1mm]
	& {\fracN{ \rho(\omega)(tr).ch! \not\in \rho(\omega)(\textit{rdy}) }
	{(ch!e, \rho) \leadm{\tau} (ch!e, \rho[\textit{rdy} \mapsto \textit{rdy}\cup\{tr.ch!\}])} }& \tag{Out-1}
	 \\[1mm]
	&{ \fracN{ \rho(\omega)(tr).ch! \in \rho(\omega)(\textit{rdy})}
	{(ch!e, \rho) \leadm{d} \left(ch!e,\rho[\now \mapsto \now+d], H_d^\rho
	   \right)} }& \tag{Out-2}
	 \\[1mm]
&	 {\fracN{\rho(\omega)(tr).ch! \in \rho(\omega)(\textit{rdy})}
	{(ch!e, \rho) \leadm{ch!e} \left(\pstop, \begin{array}{r}\rho[\textit{rdy} \mapsto \textit{rdy}\backslash \{tr.ch!\},
	   tr \mapsto tr\cdot\langle ch.e, now\rangle]
	 \end{array}\right)} }& \tag{Out-3}
	 \\[1mm]
&   \fracN{
	  \begin{array}{l}
	     	     (P_1, \rho_1 )
	      \leadm{ch*} (P_1', \rho_1'),  \quad
	     (P_2,  \rho_2)
	      \leadm{\overline{ch*}} (P_2', \rho_2'),    \\
	  \end{array}}
	  {
	   (P_1 \parallel P_2, \rho_1\uplus \rho_2) \leadm{\textit{comm}(ch*, \overline{ch*})}  (P_1' \parallel P_2', \rho_1'\uplus \rho_2')}
	  &\tag{Par-1}   \\[1mm]  \nonumber
&	  \fracN{
	  \begin{array}{l}
	     (P_1, \rho_1) \leadm{\beta} (P_1', \rho_1'), \quad  \Sigma(\beta) \not \in \Sigma(P_1) \cap
	       \Sigma(P_2)
	  \end{array}}
	  {     (P_1 \parallel P_2, \rho_1\uplus \rho_2) \leadm{\beta}  (P_1' \parallel P_2, \rho_1'\uplus \rho_2)}
	 & \tag{Par-2}   \\[1mm] \nonumber
&	  {\fracN{(P_i, \rho_i) \leadm{d} (P_i', \rho_i', H_i) \mbox{ for $i =1, 2$}}
	   {(P_1 \parallel P_2, \rho_1\uplus \rho_2) \leadm{d}  (P_1' \parallel P_2', (\rho_1'\uplus \rho_2'), H_1 \uplus H_2)}}
	  &\tag{Par-3} \\[1mm] \nonumber
&	     	       	    (\pstop \parallel \pstop, \rho_1\uplus \rho_2) \leadm{\tau}  (\pstop, \rho_1\uplus \rho_2)
&	  \tag{Par-4}\\ \nonumber
%
&	 \fracN{ \rho(\omega)(B) = \ptrue}
	      { (B\rightarrow P, \rho) \leadm{\tau}
	    (P,\rho[tr\mapsto tr \cdot \langle \tau, now\rangle])}   &  \tag{  Cond-1 }  \\[1mm] \nonumber
&	\fracN{ \rho(\omega)(B) = \pfalse}
	      {(B\rightarrow P, \rho) \leadm{\tau}
	    (\pstop,\rho[tr\mapsto tr \cdot \langle \tau, now\rangle])}  &   \tag{ Cond-2}\\[1mm] \nonumber
&	\fracN{ (P,\rho) \leadm{\alpha} (P', \rho', H) \quad P' \neq \pstop}
	  {(P;Q, \rho) \leadm{\alpha} (P';Q, \rho', H)} &  ~~ \tag{ Seq-1 }  \\[1mm] \nonumber
&	\fracN{ (P,\rho) \leadm{\alpha} (\pstop, \rho', H)  }
	  {(P;Q, \rho) \leadm{\alpha} (Q, \rho', H)} & ~~ \tag{ Seq-2 }
	\nonumber\\[1mm]
&	\fracN{(P, \rho) \leadm{\alpha} (P', \rho', H) \quad P' \neq \pstop }
	        {(P^*, \rho)\leadm{\alpha} (P'; P^*, \rho', H)}   &~~ \tag{ Rep-1}  \\[1mm] \nonumber
&	        \fracN{(P, \rho) \leadm{\alpha} (\pstop, \rho', H) }
	        {(P^*, \rho)\leadm{\alpha} (P^*, \rho', H)}  & ~~ \tag{Rep-2} \nonumber \\[1mm]
	&        (P^*, \rho) \leadm{\tau} (\pstop, \rho[tr\mapsto tr \cdot \langle \tau, now\rangle]) & \tag{ Rep-3} \nonumber
\end{align}
\caption{The semantics of the rest of SHCSP}
\label{tab:osemanticsrest}
\end{table}

The semantics of the rest of SHCSP is given in Table~\ref{tab:osemanticsrest}. The semantics of \sm{\pskip} and \sm{x:=e} are defined as usual, except that for each, an internal event occurs. Rule (Idle) says that a terminated configuration can keep idle arbitrarily, and then evolves to itself. For input \sm{ch?x}, the input event has to be put in the ready set if it is enabled (In-1); then it may wait for its environment for any time \sm{d} during keeping ready (In-2); or it performs a communication and terminates, and accordingly
the corresponding event will be removed from the ready set, and \sm{x} is assigned and \sm{tr} is extended by the communication (In-3). The semantics of  output $ch!e$ is  similarly defined by rules (Out-1), (Out-2) and (Out-3).

	For \sm{P_1 \| P_2}, we always assume that the initial states \sm{\rho_1} and \sm{\rho_2} are parallelable. There are four rules: both \sm{P_1} and \sm{P_2} evolve for \sm{d} time units in case they can delay \sm{d} time units respectively; or \sm{P_1} may progress separately on internal events or external communication events (Par-2), and the symmetric case can be defined similarly (omitted here); or they together perform a synchronized communication (Par-3); or \sm{P_1\|P_2} terminates when both \sm{P_1} and \sm{P_2} terminate (Par-4). At last, the semantics for conditional, sequential, internal choice, and repetition is defined as usual.

\subsection{Proof of Theorem~\ref{welldefined}}
\myproof{
We will prove the $c\grave{a}dl\grave{a}g$, adaptedness and Markov time properties by induction on the structure of SHCSP $P$. To simplify notation, we assume that the process $P$ start at time 0 and $\Delta(P)$ is short for $\Delta(P,P')$ if $P'=\pstop$.

\begin{itemize}
\setlength{\parindent}{1.5em}

  \item Cases $\pskip$, wait $d$ and $x=e$: Deterministic times $\Delta(\pskip)=\Delta(x=e)=0$ and $\Delta(\mbox{wait }d)=d$ are trivial Markov times. For $\pskip$ and wait $d$, $H$ is adapted to the filtration generated by $\rho$. For $x=e$, $H$ is adapted to $\rho$ and $e$. For $\pskip$ and $x=e$, $H$ is trivially $c\grave{a}dl\grave{a}g$ as the time domain is \{0\}.

  \item Case In-1:  $\Delta(ch?x, ch?x)=0$ is a trivial Markov time. $H$ is $c\grave{a}dl\grave{a}g$ and adapted to the filtration generated by $\rho$.

  \item Case In-2:$\Delta(ch?x, ch?x)=d$ is a trivial Markov time. $H$ is $c\grave{a}dl\grave{a}g$ and adapted to the filtration generated by $\rho$.

  \item Case In-3:$\Delta(ch?x)=d$ is a trivial Markov time. $H$ is $c\grave{a}dl\grave{a}g$ and adapted to the filtration generated by $\rho$ and $e$.

  For cases Out-1, Out-2 and Out-3, the fact can be proved similarly.

  \item Case $\la ds=bdt+\sigma dW \& B \ra$: $\Delta(\la ds=bdt+\sigma dW \& B \ra)=\mbox{inf}\{t\geq 0: X_t \notin B\}$ is a Markov time if $B$ is any Borel set. Here, $X_t$ is the solution of  $SDE$ $ds=bdt+\sigma dW$. $H$ is adapted to the filtration generated by $(W_s)_{s\leq t}$ and $\rho$.

  \item Case $B\to P$: If $B$ is true, executing $B\to P$ is same as executing $P$. By induction hypothesis, $\Delta(P)$ is a Markov time and $H$ is $c\grave{a}dl\grave{a}g$ and adapted. If $B$ is false, the fact holds obviously.

  \item Case $P \sqcup_{p} Q$: By induction hypothesis, $\Delta(P)$ and $\Delta(Q)$ are both Markov time. So $\Delta(P \sqcup_{p} Q)$, the sum of two Markov times $p\Delta(P)$ and $(1-p)\Delta(Q)$, is also a Markov time. By induction hypothesis, $H'$ for $P$ and $H''$ for $Q$ are both $c\grave{a}dl\grave{a}g$. Because $c\grave{a}dl\grave{a}g$ functions form an algebra, $H$ is also $c\grave{a}dl\grave{a}g$ for every outcome of $\sqcup$. $H$ is adapted, because $H'$ and $H''$ are adapted and the choice $\sqcup$ generates the filtration.

  \item Case $P;Q$: Suppose $(P;Q, \rho) \leadm{\alpha} (Q, \rho', H')$ and $(Q, \rho') \leadm{\alpha} (\pstop, \rho'', H'')$. By induction hypothesis, $\Delta(P;Q,Q)=\Delta(P)$ is a Markov time and  $H'$ is $c\grave{a}dl\grave{a}g$ and adapted to $(\mathcal{F'}_t)_{t\geq 0}$. $\rho'$ is a random variable. By induction hypothesis, $\Delta(Q)$ is a Markov time and $H''$ is $c\grave{a}dl\grave{a}g$ and adapted to $(\mathcal{F''}_{t-\Delta(P)})_{t\geq \Delta(P)}$. Obviously, $\Delta(P;Q)=\Delta(P)+\Delta(Q)$ is a Markov time. $H$ is adapted to $(\mathcal{F}_t)_{t\geq 0}$, since the two parts $H'$, $H''$ are adapted. By induction hypothesis, $H$ is $c\grave{a}dl\grave{a}g$ on $[0, \Delta(P;Q,Q))$ and on $(\Delta(P;Q,Q),\infty )$, because the constituent fragments are. At $\Delta(P;Q, Q)$, $H$ is $c\grave{a}dl\grave{a}g$, by construction.

  \item Case $\langle d s = b dt + \sigma dW \& B\rangle \unrhd_d Q$: This case can be defined by $t=0; \langle d s = b dt + \sigma dW \& t<d \wedge B\rangle; t\geq d \to Q$. The fact can be proved similarly as the case $P;Q$.

  \item Case $\exempt{\langle d s = b dt + \sigma dW \& B\rangle}{i\in I}{\omega_i \cdot ch_i*}{Q_i}$: If the evolution of $SDE$ terminates before any communication occurs, this case is same as $\la ds=bdt+\sigma dW \& B \ra$. Otherwise, $H$ is $c\grave{a}dl\grave{a}g$ and adapted the filtration generated by $\rho$, $(W_s)_{s\leq t}$ and the weights $\{\omega_i\}_{i\in I}$. $\Delta(\exempt{\langle d s = b dt + \sigma dW \& B\rangle}{i\in I}{\omega_i \cdot ch_i*}{Q_i})$ is a Markov time, since the communication and $Q_i$ are both Markov times.

  \item Case $P \| Q$: Suppose $(P_1 \parallel P_2, \rho_1\uplus \rho_2) \leadm{}  (\pstop \parallel \pstop, \rho_1'\uplus \rho_2', H_1 \uplus H_2)$. Because the processes $P$ and $Q$ don't share variables, by induction hypothesis, $H=H_1 \uplus H_2$ is $c\grave{a}dl\grave{a}g$ and adapted to the filtration generated by $\rho_1\uplus \rho_2$, $(W_s)_{s\leq t}$ and the weights $\{\omega_i\}_{i\in I}$. $\Delta(P\| Q)=\mbox{max}(\Delta(P),\Delta(Q))$ is a Markov time.

\end{itemize}

}

\subsection{Proof of Theorem~\ref{Them:Mea}}
\myproof{We will prove this fact by induction on the structure of $\theta$, $S$ and $\formt$.

 $\newsemz{\theta}$ is a random variable:
\begin{enumerate}
 \item $\newsemz{c}=c$ is a random variable trivially.
 \item $\newsemz{x}=Y$ is a random variable, because $Y(\omega)=Z(\omega)(x)$ for each $\omega \in \Omega$ and $Z$ is measurable. So is $Y$.
 \item $\newsemz{f^k(E_1, ..., E_k)}=f^k(\newsemz{E_1}, ..., \newsemz{E_k})$ is a random variable, because $\newsemz{E_1},$ $ ..., \newsemz{E_k}$ are measurable and $f^k$ is Borel-measurable. Thus, the composition $f^k(\newsemz{E_1}, ..., \newsemz{E_k})$ is measurable (the $\sigma$-algebras in the composition are compatible).

The cases $\newsemz{o}$, $\newsemz{now}$, $\newsemz{u^l(T_1, ..., T_l)}$, $\newsemz{\varepsilon}$ and $\newsemz{\pair{ch.E}{T}}$ can be proved similarly.
\item $\newsemz{h_1 \cdot h_2}=\newsemz{h_1} \cdot \newsemz{h_2}$ is a product. It is also measurable by induction hypothesis (measurable functions form an algebra).
\end{enumerate}
 $\newsemz{S}$ is a random variable:
\begin{enumerate}
 \item $\newsemz{\bot}=0$ is trivially measurable.
 \item $\newsemz{h.ch?}=\mathcal{I}_{\{\omega \in \Omega |\newsemz{h}(\omega).ch? \in Z(\omega)(rdy) \}}$ is measurable, because $\newsemz{h.ch?}\equiv 0$ or 1.
 \item $\newsemz{\neg S}= 1-\newsemz{S}$ is measurable ($\newsemz{S}$ is measurable).

 $\newsemz{R^n(\termt_1, \dots, \termt_n)}$, $\newsemz{h.ch!}$ and $\newsemz{S_1 \vee S_2}$ can be proved similarly.
\end{enumerate}
$\newsemhz{\formt}$ is a random variable:
\begin{enumerate}
 \item $\newsemhz{\bot}=0$ is trivially measurable.
 \item $\newsemhz{\rdyat{S}{T}}= \Lbrack S \Rbrack^{\timefun(\newsemz{T})}$ is measurable, because $\Lbrack S \Rbrack^{\timefun(\newsemz{T})}$ is.
 \item $\newsemhz{\neg \formt} =1- \newsemhz{\formt}$ is measurable ($\newsemhz{\formt}$ is measurable).
 \item $\newsemhz{\forall v.\formt }=\mbox{inf}\{\newsemhz{\formt[b/v]} :b \in \mathbb{R} \}$ is measurable for the following reason. By Theorem 1, $\timefun$ is measurable (adapted). By induction hypothesis, $\newsemhz{\formt[b/v]}$ is measurable for each b. Consider a rational mesh $\pi:=\{b_1, b_2, \dots, b_n\}\subset \mathbb{Q}$ with $b_1\leq b_2 \leq \cdots \leq b_n$. It's obvious that $\newsemhz{\formt[b/v]}$ is measurable for each $b\in \pi$. So, the (finite) countable infimum $\mbox{inf}\{\newsemhz{\formt[b/v]} :b \in \pi \}$ is measurable. Then, the countable infimum $\mbox{inf}\{\newsemhz{\formt[b/v]} :b \in \pi \mbox{ for a rational mesh}\}$ is measurable, because the set of rational meshes is countable. Notice that $\timefun$ is $c\grave{a}dl\grave{a}g$ by Theorem 1, so $\mbox{inf}\{\newsemhz{\formt[b/v]} :b \in \mathbb{R} \}$  is measurable.

 $\newsemhz{\formt_1 \vee \formt_2}$ and $\newsemhz{\forall t.\formt }$ can be proved similarly.
\end{enumerate}
}

\subsection{Proof of Theorem~\ref{Theorem:Soundness}}
\myproof{
To prove soundness, we need to show that the axioms are valid, and that every inference rule in the proof system preserves validity. That is, if every premise of the rule is valid, then the conclusion is also valid.

We will prove the soundness theorem by induction on the structure of Stochastic HCSP processes $S$.
In the following proof, we always assume $S$ executes in parallel with its environment $E$, and
 $(S \| E, \funI_1 \uplus \funI_2) \leadm{\alpha^*} (\pstop\|E', \funI_1' \uplus \funI_2',\timefun)$; $\timefun$ is the stochastic process of the evolution and $T_0 = \rho_1(\now)$ for simplicity. Moreover, for readability, we will write \sm{\Lbrack A \Rbrack^{\funI}} and \sm{\Lbrack E \Rbrack^{\timefun,\funI}} as $\funI \models A$ and $\funI, \timefun \models E$, for any state $\funI$, any stochastic process $\timefun$, any state formula $A$, and any formula $E$.

\begin{itemize}
\setlength{\parindent}{1.5em}

  \item Case $\pskip$: The fact holds trivially from the fact $\funI_1' = \funI_1[tr+\tau]$. \\

  \item Case Assignment $x := e$: From the operational semantics, we have $\funI_1' = \funI_1[x\mapsto e,tr \mapsto tr\cdot  \pair{\tau}{\now)}]$.  Assume $\rho_1 \models (A \wedge tr=h)[e/x]$, we need to prove $\funI_1' \models A \wedge tr=h+\tau$. Obviously this holds.\\

\item Case Input $ch?x$: From the operational semantics, we have
$\funI_1' = \funI_1[\now \mapsto T_0 +d, x \mapsto b, tr \mapsto tr\cdot \pair{ch.b}{T_0+d}]$ for some $d\geq 0$ and $b$; and for any $\omega \in \Omega$ and any $t\in [T_0, T_0+d)$, $\rho_1(\omega)(tr).ch!\!\upharpoonright_{ch} \notin \timefun(t,\omega)(rdy)\!\upharpoonright_{ch}$, and
 $\rho_1(\omega)(tr).ch!\!\upharpoonright_{ch} \in \timefun(T_0+d,\omega)(rdy)\!\upharpoonright_{ch}$; and for any $t\in [T_0, T_0+d]$, $\rho_1(\omega)(tr).ch? \in \timefun(t,\omega)(rdy)$. Assume $\rho_1\models A \wedge \now = o \wedge tr=h \wedge x = v$ and $\funI_2, \timefun\models \rdyduring{\neg h.ch!}{[o,o_1)} \wedge \rdyat{h.ch!}{o_1}$, we need to prove that $\funI_1'\models A[v/x,o/\now] \wedge \now = o_1 \wedge \exists v'. (x=v' \wedge tr=h')$ and $\funI_1', \timefun\models \rdyduring{h.ch?}{[o, o_1)}$, where $h'$ is $h[v/x,o/\now] \cdot \pair{ch.v'}{o_1}$.

First from $\rho_1\models A \wedge \now = o \wedge x = v$ and the assumption that $A$ does not contain $tr$, we have $\rho_1\models  A[v/x,o/\now]$. Compare $\funI_1'$ with $\funI_1$, we can find that only variables $tr$, $\now$, and $x$ are changed. Plus that $A$ does not contain $tr$, we obtain $\funI_1'\models A[v/x,o/\now]$.

From the assumption $\funI_1, \timefun\models \rdyduring{\neg h.ch!}{[o,o_1)} \wedge \rdyat{h.ch!}{o_1}$, we can get the fact that $\forall t \in [o, o_1). \timefun(t,\cdot)(h).ch! \upharpoonright_{ch} \notin \timefun(t,\cdot)(rdy)\!\upharpoonright_{ch}$, and $\timefun(t,\cdot)(h).ch!\upharpoonright_{ch} \in \timefun(o_1,\cdot)(rdy)\!\upharpoonright_{ch}$. From $\rho_1\models  tr=h $, then $\rho(\cdot)(tr) = \rho(\cdot)(h)$, and obviously $\rho(\cdot)(h) =\timefun(t,\cdot)(h)$ since the number of $ch$ in $h$ does not change during the waiting time. Plus the fact that $T_0 = o$, we finally obtain $T_0 +d = o_1$. So  $\funI_1', \timefun\models  \now = o_1$ holds.

Denote $\rho_1'(\cdot)(x)$ by $c$, then $\funI_1'\models\exists v'. x=v'$ holds by assigning $v'$ with $c$. From the semantics of substitution, $\rho_1'(\cdot)(tr) = \rho_1(\cdot)(h) \cdot \pair{ch.c}{T_0+d}$. On the other hand, $\rho_1'(\cdot)(h[v/x,o/\now] \cdot \pair{ch.v'}{o_1}) = \rho_1(\cdot)(h) \cdot \pair{ch.c}{o_1}$. Thus, plus the above fact, we prove that $\funI_1'\models \exists v'. (x=v' \wedge tr=h')$.

Finally, from the operational rule, we have $ \funI_1', \timefun\models  \rdyduring{\rho_1(\cdot)(tr).ch?}{[T_0, T_0+d]}$. Based on the facts $T_0 = o$, $T_0 +d = o_1$, and $\rho_1(\cdot)(tr) = \rho_1(\cdot)(h)$, we  prove  the result.

\item Case Output $ch!e$: The fact can be proved similarly to $ch?x$.\\

\item Case Continuous $\langle d s = b dt + \sigma dW \& B\rangle$: First assume the continuous terminates. To prove this, we first introduce two lemmas.
\begin{lemma}
 Let $X_t$ an a.s. right continuous strong Markov process (e.g. solution from $SDE$) and $X_0=x$.  If $f\in C^2(\mathbb{R}^n,\mathbb{R}) $has compact support and $\tau$ is a Markov time with $E^x\tau<\infty$, then
\begin{displaymath}
 E^xf(X_t)=f(x)+E^x\int_0^{\tau}Af(X_s)ds
\end{displaymath}
where $Af(x):=\mathop{lim}\limits_{t\searrow 0}\frac{E^xf(X_t)-f(x)}{t}$
  \label{dynkin}
\end{lemma}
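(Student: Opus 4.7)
The plan is to use the classical martingale approach to Dynkin's formula. First I would define the candidate martingale $M_t := f(X_t) - f(x) - \int_0^t A f(X_s)\,ds$ with respect to the natural filtration $(\mathcal{F}_t)_{t\geq 0}$ of $X$ (already known to be $c\grave{a}dl\grave{a}g$ and adapted by Theorem~\ref{welldefined}). The martingale property $E^x[M_t \mid \mathcal{F}_s] = M_s$ for $s \leq t$ should follow from the strong Markov property of $X$ combined with the infinitesimal characterization of the generator $A$: conditioning at time $s$ reduces $E^x[f(X_t) \mid \mathcal{F}_s]$ to a Markov-restarted expectation at $X_s$, and the defining limit of $A$ integrates the infinitesimal increments into $\int_s^t A f(X_r)\,dr$.

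Next I would apply the Optional Stopping Theorem to $M_t$ at the Markov time $\tau$. Since $f \in C^2(\mathbb{R}^n,\mathbb{R})$ has compact support $K$, both $f$ and $Af$ — the latter involving only first and second derivatives of $f$ multiplied by coefficients that are locally bounded on $K$ — are uniformly bounded. Stopping at $\tau \wedge n$ gives $E^x[M_{\tau \wedge n}] = E^x[M_0] = 0$, and the dominating bound $|M_{\tau \wedge n}| \leq \|f\|_\infty + \|Af\|_\infty \cdot \tau$ together with the hypothesis $E^x\tau < \infty$ allows dominated convergence as $n \to \infty$. This yields $E^x[M_\tau] = 0$, and hence the stated identity after rearrangement.

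The main obstacle will be establishing the martingale property rigorously for an arbitrary a.s.\ right-continuous strong Markov process, since the generator $A$ in the lemma is defined only as a pointwise infinitesimal limit. For SDEs of the form $dX_t = b(X_t)dt + \sigma(X_t)dW_t$ — the only use case needed in this paper — I would bypass any abstract semigroup argument and instead appeal directly to It\^o's formula applied to $f(X_t)$. It\^o's formula yields the decomposition
\[
  f(X_t) = f(x) + \int_0^t A f(X_s)\,ds + \int_0^t \nabla f(X_s)^T \sigma(X_s)\,dW_s,
\]
where the coefficient $A$ coincides with the Lie derivative $Lf$ used in the SDE inference rule. The remaining stochastic integral is a true $L^2$-martingale because $\nabla f \cdot \sigma$ is bounded on the support of $f$, so $M_t$ is exactly that stochastic integral and the martingale property becomes automatic. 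With this, only the Optional Stopping step above remains, completing the proof and positioning the lemma to be used in the SDE case of the soundness theorem.
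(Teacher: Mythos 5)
The paper itself offers no proof of this lemma: it is invoked inside the soundness argument for the SDE case as the classical Dynkin formula, cited as a known result. Your proposal therefore supplies an argument where the paper supplies none, and it is the standard textbook one: write $M_t := f(X_t) - f(x) - \int_0^t Af(X_s)\,ds$, establish that $M$ is a martingale, and apply optional stopping at $\tau \wedge n$ followed by dominated convergence using $E^x\tau < \infty$ and the boundedness of $f$ and $Af$ coming from compact support. Your decision to sidestep the abstract-generator difficulty by specializing to It\^o diffusions and reading $M_t$ off It\^o's formula as the stochastic integral $\int_0^t \nabla f(X_s)^T\sigma(X_s)\,dW_s$ is sound and is exactly the right scope for this paper, since the lemma is only ever applied to solutions of $ds = b\,dt + \sigma\,dW$ and the resulting operator coincides with the Lie derivative $Lf$ appearing in the SDE inference rule; as stated for an arbitrary a.s.\ right-continuous strong Markov process with $A$ defined only as a pointwise limit, the lemma would indeed need extra hypotheses, so flagging that as the obstacle is correct rather than a gap. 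Two small points: the displayed identity in the lemma should read $E^x f(X_\tau)$ on the left (the $X_t$ is a typo your proof implicitly corrects), and your dominating bound should be $2\|f\|_\infty + \|Af\|_\infty\,\tau$ to account for the $f(x)$ term, which changes nothing.
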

\begin{lemma}
If $f(X_t)$ is a $c\grave{a}dl\grave{a}g$ supermartingale with respect to the filtration generated by $(X_t)_{t\geq 0}$ and $f\geq 0$ on the evolution domain of $X_t$, then for all $\lambda >0$:
\begin{displaymath}
 P(\mathop{sup}\limits_{t\geq 0}f(X_t)\geq \lambda | \mathcal{F})\leq \frac{Ef(X_0)}{\lambda}
\end{displaymath}
  \label{doob}
\end{lemma}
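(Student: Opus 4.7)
The plan is to prove Lemma \ref{doob} by the standard two-stage route: first establish the inequality in discrete time via optional stopping, then transfer to continuous time using the c\`adl\`ag hypothesis. Throughout, let $M_t := f(X_t)$, which by hypothesis is a nonnegative c\`adl\`ag supermartingale with respect to the natural filtration $(\mathcal{F}_t)_{t\geq 0}$ of $X$.

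For the discrete case, I would fix a finite set of times $0=t_0<t_1<\dots<t_N$ and define the stopping time
\[
\tau \;=\; \min\bigl\{\,k\leq N : M_{t_k}\geq \lambda\,\bigr\}\wedge N,
\]
with the convention that the minimum of the empty set is $N$. Since $\tau$ is bounded, the optional stopping theorem applied to the nonnegative supermartingale $(M_{t_k})_{k=0}^N$ yields $E\,M_{t_0}\geq E\,M_\tau$. On the event $\{\max_{k\leq N}M_{t_k}\geq\lambda\}$ we have $M_\tau\geq\lambda$ by definition of $\tau$, while on the complement $M_\tau\geq 0$; therefore
\[
E\,M_0 \;\geq\; E\,M_\tau \;\geq\; \lambda\,P\!\Bigl(\max_{k\leq N} M_{t_k}\geq\lambda\Bigr),
\]
which is the discrete maximal inequality on the finite grid.

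To pass to continuous time, I would fix a countable dense subset $D\subset[0,\infty)$ (e.g.\ the dyadic rationals) and let $D_n\uparrow D$ be an increasing sequence of finite subsets. Applying the inequality above to each $D_n$ and letting $n\to\infty$, monotone convergence of $\mathbf{1}_{\{\sup_{t\in D_n}M_t\geq\lambda\}}$ gives
\[
P\!\Bigl(\sup_{t\in D}M_t\geq\lambda\Bigr)\;\leq\;\frac{E\,M_0}{\lambda}.
\]
The final step is to upgrade $\sup_{t\in D}M_t$ to $\sup_{t\geq 0}M_t$. Here the c\`adl\`ag hypothesis is essential: for every $\omega$ in a set of full probability, right-continuity of $t\mapsto M_t(\omega)$ implies that for any $s\geq 0$ there is a sequence $s_n\downarrow s$ with $s_n\in D$ and $M_{s_n}(\omega)\to M_s(\omega)$, hence $\sup_{t\geq 0}M_t = \sup_{t\in D}M_t$ almost surely. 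This simultaneously resolves the measurability issue of the uncountable supremum. For the conditional form stated in the lemma, one repeats the argument with conditional expectations: apply the conditional optional stopping identity $E[M_\tau\mid\mathcal{F}]\leq E[M_0\mid\mathcal{F}]$ (valid when $\mathcal{F}\subseteq\mathcal{F}_0$) and use $\lambda\,\mathbf{1}_{\{\max_k M_{t_k}\geq\lambda\}}\leq M_\tau$ inside the conditional expectation before extending to the continuous supremum as above.

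The main obstacle is the last step, where the inequality on the dense skeleton $D$ must be lifted to the true supremum without introducing a non-measurable set; controlling this is exactly what the c\`adl\`ag property buys us, and it is important to discard the null set of non-c\`adl\`ag paths guaranteed by the "almost surely everywhere" hypothesis on $X_t$ before taking the supremum. A secondary subtlety is the reading of the conditioning $\sigma$-algebra $\mathcal{F}$ in the statement: interpreting it as $\mathcal{F}_0$ (so that $E[f(X_0)\mid\mathcal{F}] = f(X_0)$) makes the conditional form an immediate strengthening of the unconditional argument, whereas for $\mathcal{F}=\mathcal{F}_\infty$ the statement would be trivial.
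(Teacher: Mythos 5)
Your proof cannot be compared against a paper proof for a simple reason: the paper never proves this lemma. It is stated (together with Dynkin's formula, Lemma~\ref{dynkin}) as a known fact of stochastic calculus --- Doob's maximal inequality for nonnegative supermartingales --- and used as a black box inside the soundness argument for the SDE rule in Theorem~\ref{Theorem:Soundness}. What you have written is the standard textbook derivation (finite-grid optional stopping, extension to a countable dense skeleton, then right-continuity to identify $\sup_{t \in D} f(X_t)$ with $\sup_{t \geq 0} f(X_t)$ almost surely), and in outline it is correct; it supplies exactly the argument the paper delegates to the literature.

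Two wrinkles should be tightened if you want the argument airtight. First, in the passage from the finite grids $D_n$ to the dense set $D$, the events $\{\sup_{t \in D_n} M_t \geq \lambda\}$ increase to $\{\exists t \in D :\, M_t \geq \lambda\}$, which can be strictly smaller than $\{\sup_{t \in D} M_t \geq \lambda\}$, since a supremum over an infinite set may equal $\lambda$ without being attained; the standard repair is to run your inequality for every $\lambda' < \lambda$ and then let $\lambda' \uparrow \lambda$, rather than invoking monotone convergence directly at level $\lambda$. Second, your conditional version proves $P(\sup_{t \geq 0} f(X_t) \geq \lambda \mid \mathcal{F}_0) \leq E[f(X_0) \mid \mathcal{F}_0]/\lambda = f(X_0)/\lambda$ almost surely, a bound by a \emph{random variable}, whereas the lemma as stated has the deterministic constant $E f(X_0)/\lambda$ on the right; neither bound dominates the other pointwise, so the two statements are genuinely different. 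This mismatch is really a defect of the lemma as the paper states it (an unspecified conditioning $\sigma$-algebra $\mathcal{F}$ paired with an unconditional expectation), and in the paper's only use of it --- where the premise forces $f \leq \lambda p$ at the initial state, so that both $f(X_0)/\lambda \leq p$ and $E f(X_0)/\lambda \leq p$ --- either reading yields the required bound $p$. Your reading is therefore reasonable, but you should state explicitly which of the two inequalities you are actually proving.
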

We have $\rho_1'=\rho_1[\now \mapsto T_0+d, s \mapsto X(d,\cdot)][tr+\tau]$ for some $d\geq 0$ where $X:[0,+\infty) \times \Omega \to \mathbb{R}^{d(s)}$ is the solution of the $SDE$; and for all $t\in [T_0, T_0+d).\timefun(t,\cdot)(s) = X(t,\cdot)$. We define  another random variable  $Y=\mbox{ sup}\{f(X_t): t\in [0,d)\}$. $f\in C^2(\mathbb{R}^{d(s)},\mathbb{R})$ has compact support on $B$. Consider any $x\in \mathbb{R}^{d(s)}$ and any time $r\geq 0$. The deterministic time $r$ is a Markov time with $E^xr=r<\infty$. By Lemma 1, we have
\begin{displaymath}
 E^xf(X_r)=f(x)+E^x\int_0^{r}Af(X_t)dt
\end{displaymath}
where $Af=Lf\leq 0$ by the premise. So $\int_0^{r}Af(X_t)dt\leq 0$, hence, $E^x\int_0^{r}Af(X_t)dt\leq 0$. This implies  $E^xf(X_r)\leq f(x)$ for all $x$.

The filtration is right-continuous and $f\in C(\mathbb{R}^{d(s)}, \mathbb{R})$ is compactly supported, the strong Markov property for $X_t$ implies for all $t\geq r \geq 0$ that $E^x(f(X_t)| \mathcal{F}_r)=E^{X_r}f(X_{t-r})\leq f(X_r)$. Thus, $f(X_t)$ is a supermartingale with respect to $X_t$, because it is adapted to the filtration of $X_t$ and $E^x|f(X_t)|<\infty$ for all $t$ since $f\in C^2(\mathbb{R}^{d(s)}, \mathbb{R})$ has compact support. Consider any initial state $Y$ for $X$. By Lemma 2 and the premises, we have $P(\mathop{sup}\limits_{t\geq 0}f(X_t)\geq \lambda | \mathcal{F_0})\leq \frac{Ef(Y)}{\lambda}\leq \frac{\lambda p}{\lambda}=p$. The fact holds.

The other case is that the continuous does not terminates in finite time. From  proof above, for any $d>0$, we have $\mathcal{P}_{\leq p}(f(s)\geq \lambda )\rdyduring{}{[T_0, T_0+d]}$. So we can get $\mathcal{P}_{\leq p}(f(s)\geq \lambda )\rdyduring{}{[T_0, \infty)}$. The result holds.

\item Case Sequential Composition $P; Q$: We assume the intermediate state at termination of $P$ is $\rho_1''$ (thus
$Q$ will start from $\rho_1''[tr+\tau]$), and the behaviors
of $P$ and $Q$ are $\timefun_1$ and $\timefun_2$ respectively, whose concatenation is exactly $\timefun$.
 Assume we have $\funI_1 \models \Pre\wedge \now=o$ and $\rho_1, \timefun\models E$, we need to
prove that $\funI_1' \models \Post$ and $
\rho_1', \timefun\models \fdelta{C_1}{o_1/\now}  \wedge C$,
where $\triple{A\wedge \now=o; E}{P}{R_1 \wedge \now=o_1 \wedge tr=h_1; C_1}$ and
$\triple{R_1 \wedge \now=o_1 \wedge tr=h_1+\tau; \fdelta{E}{o/\now}}{P}{R; C}$
 as in the rule for sequential composition.

According to the inference rules, from $\triple{A\wedge \now=o; E}{P}{R_1 \wedge \now=o_1 \wedge tr=h_1; C_1}$, we can
get $\triple{A\wedge \now=o; E\!\upharpoonright_{\leq o_1}}{P}{R_1 \wedge \now=o_1 \wedge tr=h_1; C_1}$,
where $E\!\upharpoonright_{\leq o_1}$ only addresses the behavior of
environment before or equal time $o_1$. Then the proof is given as follows:
 First, from $\rho_1, \timefun\models E$, we have
 $\rho_1, \timefun_1\models E\!\upharpoonright_{\leq o_1}$, then by induction hypothesis, for $P$,
 we have  $\funI_1'' \models \Post_1 \wedge \now=o_1 \wedge tr=h_1$ and $
\rho_1'', \timefun_1 \models C_1$. Similarly, by induction hypothesis again for
$Q$, we have $\rho_1' \models R$ and $
\rho_1', \timefun_2\models C$, then $
\rho_1', \timefun \models C$. From $
\rho_1'', \timefun_1 \models C_1$, we have $
\rho_1', \timefun \models C_1[o_1/\now]$. The result is proved finally.

\item Case Probabilistic Choice $P  \sqcup_p Q$: We may assume $\bowtie$ is $\geq$. From operational semantics, we have $\triple{\mathcal{P}_{\geq p'}(S); E}{P}{\mathcal{P}_{\geq p_1}(S); C_1} $ with probability $p$ and $\triple{\mathcal{P}_{\geq p'}(S); E}{Q}{\mathcal{P}_{\geq p_2}(S); C_2}$ with probability $1-p$. Assume $\funI_1 \models \Pre $, and $\funI_2, \timefun \models E$. By the law of total probability, we can easily get  $\rho_1' \models \mathcal{P}_{\geq pp_1+(1-p)p_2}(S)$ and $\timefun \models C_1 \vee C_2$.

\item Case Communication Interrupt: Assume $\funI_1 \models \Pre \wedge \now=o$, and $\funI_2, \timefun \models E$.
 For the first case, assume we have
$ \{A \wedge \now = o; E\} \langle d s = b dt + \sigma dW \& B\rangle \{R \wedge \now = o + o_F; C\}$, and
$ (A \wedge \now=o \wedge E) \imply (tr=h\wedge  \rdyduring{\neg h.ch!}{[o, o+o_F]})$, we need to prove
$\rho_1' \models R \wedge \now = o + o_F$ and $\funI_1', \timefun \models C$. From the assumption,
we have $\funI_1 \models tr=h$ and $\funI_2, \timefun \models\rdyduring{\neg h.ch!}{[o, o+o_F]}$.
According to the operational semantics, the final state and the behavior of interrupt are equal to the ones of continuous. The result holds by induction hypothesis.

For the second case, assume we have
$\{A \wedge \now = o; E\} \langle d s = b dt + \sigma dW \& B\rangle\{R \wedge \now = o + o_F; C\} $,
 $   (A \wedge \now=o  \wedge E) \imply ( tr=h\wedge \rdyat{h.ch!}{(o + o_1)} \wedge
           o_1 \leq o_F )$, and
 $ \triple{A \wedge \now = o; E}{ ch?x; Q}{R_1; C_1}$, we need to prove
 $\rho_1' \models R_1$ and $\funI_1', \timefun \models \rdyduring{(\mathcal{P}_{\leq p}(f(s)\geq \lambda ) \wedge B)}{(o, o+o_1)}\wedge C_1$.
 From the assumption,
we have $\funI_1 \models tr=h$ and $\funI_2, \timefun \models\wedge \rdyduring{\neg h.ch!}{[o, o+o_1)} \wedge \rdyat{h.ch!}{(o + o_1)} \wedge
           o_1 \leq o_F $.
According to the operational semantics, the final state and the behavior of interrupt are equal to the ones of $ch?x; Q$, but in the
first $o_1$ time units, the continuous is also executing.
The result also holds by induction hypothesis.

\item Case Parallel Composition $P\|Q$: From the operational semantics, there must exist
$\rho_{11}$ and $\rho_{11}'$ , $\rho_{12}$ and $\rho_{12}'$ for initial states and terminating states of $P$ and $Q$ respectively, which satisfy: $\rho_1 = \rho_{11} \uplus \rho_{12}$ and $\rho_1' = \rho_{11}' \uplus \rho_{12}'$;
$\rho_{11}'(\cdot)(tr)\!\upharpoonright_X = \rho_{12}'\!\upharpoonright_X$ (assuming $P$ and $Q$ terminate at the same time here, which will be generalized in the following proof).
Assume we have $\funI_1
\models \Pre \wedge \now=o$, and $\rho_2, \timefun \models E$,
we need to prove $\funI_1' \models R$ and $\rho_1', \timefun \models C'_1 \wedge C'_2$, where  $\triple{A_1 \wedge \now=o; E_1}{P}{R_1 \wedge tr=\gamma_1 \wedge \now=o_1; C_1} $ and $\triple{A_2 \wedge \now=o; E_2}{Q}{R_2 \wedge tr=\gamma_2 \wedge\now=o_2; C_2}$ hold; and compatibility check $\forall ch \in X. (\fdelta{C_1}{o_1/\now}\!\upharpoonright_{ch} \Rightarrow E_2\!\upharpoonright_{ch}) \wedge (\fdelta{C_2}{o_2/\now}\!\upharpoonright_{ch} \Rightarrow E_1\!\upharpoonright_{ch})$,
$\forall dh \in X_1\setminus X.  E\!\upharpoonright_{dh} \Rightarrow E_1\!\upharpoonright_{dh}$, and $ \forall dh' \in X_2\setminus X.  E\!\upharpoonright_{dh'} \Rightarrow E_2\!\upharpoonright_{dh'}$ hold.
Among them,  $R$, $C'_1$ and $C'_2$ are defined as in the rule  for parallel composition.
The proof is given by the following steps.

First of all, we prove that $\rho_{11}', \timefun \models C_1 $ and
$\rho_{12}', \timefun  \models C_2$. If they do not hold, assume $C_1$
fails to hold  not later than $C_2$, and the first time for which
$C_1$ does not hold is $t_1$ (when it exists), then for all $t <
t_1$, $C_2$ holds. There are three kinds of formulas at time
$t_1$ in $C_1$: if the formula is for internal variables or
internal communication (between $P$ and $Q$) non-readiness, then it
will not depend on $Q$ or $E$, according to the
fact that $C_1$ holds before time $t_1$, it must hold at $t_1$; if
the formula is for external communication readiness, first from compatibility check, for any channel
$dh \in X_1 \setminus X$, it does not occur in $C_2$, then
we have $E\!\upharpoonright_{dh} \Rightarrow E_1\!\upharpoonright_{dh}$, where $E\!\upharpoonright_{dh}$  extracts formulas related to
communications along $dh$ from $E$.
Then from
$\rho_2, \timefun \models E$, we have
  $\rho_2, \timefun \models {E_1}\!\upharpoonright_{dh}$, and thus
$\rho_{12} \uplus \rho_2, \timefun \models  {E_1}\!\upharpoonright_{dh}$.
By induction hypothesis, the formula considered must hold at $t_1$; if the formula is for
internal communication readiness, then there must exist an open
interval $(t_0, t_1)$ during which it is not satisfied. From the
assumption, $C_2$ holds in the interval $(t_0, t_1)$, thus
$E_1\!\upharpoonright_X$ holds in the interval $(t_0, t_1)$. By induction, the
internal communication readiness assertions in $C_1$ hold in the
interval $(t_0, t_1)$. We thus get a contradiction. Therefore, we
can get the fact that, both  $\rho_{11}', \timefun \models C_1 $ and
$\rho_{12}', \timefun  \models C_2$  hold.
On the other hand, if such $t_1$ does not exist, there must exist an
open interval $(t_2, t_3)$ such that for all $t\leq t_2$, $C_1$ and
$C_2$ hold, while $C_1$ does not hold in $(t_2, t_3)$. The proof is
very similar to the above case. We omit it here for avoiding
repetition.

Based on the above facts, from $\funI_1 \models \Pre_1$ and $\rho_1, \timefun\models E$,
and compatibility check, we have therefore $\rho_{12}\uplus\rho_2, \timefun\models E_1$.
Similarly, we can get for another process $Q$ that $\rho_{12} \models A_1 \wedge \now=o$, and
  $\rho_{11}\uplus\rho_2, \timefun\models E_2$. Then, by induction on $P$ and $Q$, we have
  $\rho_{11}' \models R_1 \wedge tr=\gamma_1 \wedge \now=o_1$ and $\rho_{11}', \timefun \models C_1$;
  $\rho_{12}' \models R_2 \wedge tr=\gamma_2 \wedge\now=o_2$ and $\rho_{12}', \timefun \models C_2$ respectively.

Notice that $\rho_{11}' \uplus \rho_{12}'$, i.e. $\rho_1'$, only redefines the values of
  $tr$ and $\now$, where the communications are arranged in the order according
  to their occurring time, and variable $\now$ takes the greater value between $\rho_{11}'(\cdot)(\now)$ and
  $\rho_{12}'(\cdot)(\now)$.
 Obviously, we have $\rho_1' \models \Post_1[\gamma_1/tr,o_1/\now]
\wedge \Post_2[\gamma_2/tr,o_2/\now] \wedge \now = o_m$. And, $\rho_1' \models \gamma_1\!\upharpoonright_{X}=\gamma_2\!\upharpoonright_{X}$ holds because of synchronization.
From the definition of $\uplus$, $\rho_1'(tr)(t) \in
  \funI_{11}'(tr)(t) \|  \funI_{12}'(tr)(t)$,
  we can easily get the fact $\rho_1' \models tr=\gamma_1 \spara \gamma_2$.
  Thus $R$ holds for the final state.

From $\rho_{11}', \timefun \models C_1 $ and
$\rho_{12}', \timefun  \models C_2$, considering that only $\now$ change and matter,
we have $\rho_1' , \timefun \models \fdelta{C_1}{o_1/\now}
\wedge \fdelta{C_2}{o_2/\now}$.
After $P$ or $Q$ terminates, only $rdy$, $tr$ and $\now$ may change, plus the fact that $R_1$ and $R_2$ do not
contain readiness, $R_1 \Rightarrow R_1'$, $R_2 \Rightarrow R_2'$, and $R_1', R_2'$ do not contain $tr$,
we have $\rho_1' , \timefun \models \rdyduring{R_1'[o_1/\now]}{[o_1, \now)}$ and
$\rho_1' , \timefun \models \rdyduring{R_2'[o_2/\now]}{[o_2, \now)}$.
The whole result is proved.

\item Case Repetition $P^*$: From the operational semantics, we have
there must exist a finite integer $n>0$, and $\funI_{11},
 ..., \funI_{1n}$ such that $(P^* \| E, \funI_{11} \uplus \funI_2) \leadm{\alpha^*} (\pstop; P^*\|E_1, \funI_{12} \uplus \funI_{21})
 \ldots  \leadm{\alpha^*} (P^*\|E', \funI_{1n} \uplus \funI_2') \leadm{\tau} (\pstop\|E', \funI_{1n}[tr+\tau] \uplus \funI_2')$ where $\funI_{11} = \funI_1, \funI_{1n}[tr+\tau] = \funI_1'$.
Assume $\funI_1 \models A \wedge \now=o \wedge tr=h$ and $\funI_2, \timefun \models
E$, we need to prove that
$\funI_1' \models A \wedge \now=o' \wedge tr = h \cdot w^* + \tau$ and $\funI_1', \timefun \models C \vee (\rdyat{o=o'}{\now})$, where
$\triple{A\wedge \now=o + k*t \wedge tr=h \cdot w^k; E[o/\now]}{P}{A \wedge \now=o + (k+1)*t \wedge tr=h \cdot w^{k+1}; C}$ holds as defined in the rule for
Repetition for any non-negative integer $k$.

If $n=1$, then we have $ \funI_1[tr+\tau] =  \funI_1'$, let $o = o'$, the fact holds directly. If
$n>1$, from $\funI_1 \models A \wedge \now=o \wedge tr=h$ and $\funI_2, \timefun \models
E[o/\now]$, then let $k$ be $0$, by induction hypothesis, we have $\funI_{12} \models A \wedge \now= o'\wedge tr=h \cdot w$ by assigning
$o'$ by $o + t $, and $\rho_{12}, \timefun \models C$. Recursively repeating the proof, plus the fact for any $k$, $\funI_{1k}, \timefun \models
E[o/\now]$, we can prove the result.
\end{itemize}
}

\end{document}